\providecommand{\norm}[1]{\lVert#1\rVert}
\newcommand{\IM}{\mathrm{im}}
\newcommand{\QNUM}{D}
\newcommand{\Rank}{\mathrm{rank}}
\newcommand{\SPAN}{\mathrm{span}}
\newtheorem{Definition}{Definition}
\newtheorem{Theorem}{Theorem}
\newtheorem{Lemma}{Lemma}
\newtheorem{Corollary}{Corollary}
\newtheorem{Remark}{Remark}
\newtheorem{Notation}{Notation}
\newtheorem{Example}{Example}
\newtheorem{Problem}{Problem}
\begin{document}
%
\title{Model Reduction by Moment Matching for Linear Switched Systems}
%
%
%

\author{Mert Ba\c{s}tu\u{g}$^{1,2}$, Mih\'{a}ly Petreczky$^{2}$, Rafael Wisniewski$^{1}$ and John Leth$^{1}$
	\thanks{$^{1}$Department of Electronic Systems, Automation and Control, Aalborg University, 9220 Aalborg, Denmark {\tt\small mertb@es.aau.dk} Tel: +33 646897133 Fax: +33 327712917, {\tt\small raf@es.aau.dk} Tel: +45 99408762 Fax: +45 98151739, {\tt\small jjl@es.aau.dk} Tel: +45 99407973 Fax: +45 98151739}%
	\thanks{$^{2}$Department of Computer Science and Automatic Control (UR Informatique et Automatique), \'{E}cole des Mines de Douai, 59508 Douai, France {\tt\small mihaly.petreczky@mines-douai.fr} Tel: +33 327712238 Fax: +33 327712917}%
}

\markboth{IEEE Transactions on Automatic Control,~Vol.~, No.~, October~2014}%
{Model Reduction by Moment Matching for Linear Switched Systems}
%



\maketitle

\begin{abstract}
Two moment-matching methods for model reduction of linear switched systems (LSSs) are presented. The methods are similar to the Krylov subspace methods used for moment matching for linear systems. The more general one of the two methods, is based on the so called ``nice selection'' of some vectors in the reachability or observability space of the LSS. The underlying theory is closely related to the (partial) realization theory of LSSs. In this paper, the connection of the methods to the realization theory of LSSs is provided, and algorithms are developed for the purpose of model reduction. Conditions for applicability of the methods for model reduction are stated and finally the results are illustrated on numerical examples.
\end{abstract}

\begin{IEEEkeywords}
Linear switched systems, model reduction, automata.
\end{IEEEkeywords}

%
\IEEEpeerreviewmaketitle

\section{Introduction}
%
%
%
%
\IEEEPARstart{A}{}linear switched system (abbreviated by LSS) is a model of a dynamical process whose behavior changes among a number of linear subsystems depending on a logical decision mechanism, i.e., an LSS is a concatenation of linear systems. That is, the state of the linear subsystem just before a switching instant serves as the initial state for the next active linear system. The information about which local mode operates in a specific time instant, is contained in the switching signal, which can be totally arbitrary. Hence, the switching signal serves as an external input. Linear switched systems represent the simplest class of hybrid systems, they have been studied  extensively,  see \cite{liberzon2003}, \cite{Sun:Book} for an overview.


Model reduction is the problem of approximating a dynamical system with another one of smaller complexity. ``Smaller complexity'' for LSSs can refer to ``smaller number of state variables of each local mode'' or to ``smaller number of local modes''. In this work, by complexity we mean the former, and thus by model reduction we mean the approximation of the original LSS by another one, with a smaller number of states.

\textbf{Contribution of the paper}
In this paper, first we present model reduction algorithms based on partial realization theory for LSSs \cite{petreczky}. 
The main idea is to replace the original LSS by an LSS of smaller order, such that certain Markov parameters of the two LSSs are equal. Markov parameters of an LSS are the coefficients appearing in the Taylor series expansion of its input-output map around zero. More precisely, they are the high-order partial derivatives of the zero-state and zero-input responses of the LSS with respect to the dwell times (time between two consecutive changes in the switching signal) of each operating mode. Hence, if some of the lower order derivatives of the responses of two LSSs coincide, it means that their input-output behaviors are close. 

We present two methods. The first one will preserve all the Markov parameters which correspond to high-order derivatives up to order $N$ for some integer $N$. We will call this method $N$-moment matching. This is a direct counterpart of the well-known method of moment matching for linear systems, where the  reduced order model preserves the first $N$ Markov parameters of the transfer function at hand, \cite{antoulas}. The second method preserves a certain selection (not necessarily finite) of Markov parameters. The selections we allow will be referred to as \emph{nice selections}. Intuitively, a nice selection corresponds to a choice of basis of the extended controllability (resp. observability) space of an LSS \cite{Sun:Book,MP:BigArticlePartI}. The notion of nice selections is a direct generalization of the corresponding notion for linear systems \cite{Hazewinkel1,gugercin}, and in a more restricted form it appeared in \cite{petreczkypeeters1}. The second method gives the user additional flexibility in choosing
which Markov parameters should be preserved. In turn, this allows the user to focus on those Markov parameters which are relevant for the dynamical properties one wishes to preserve. For example, by choosing certain Markov parameters, it is possible to preserve the input-output behavior of the system in a certain discrete mode or even for a sequence of modes. At the end of the paper, we will present results to this effect. 
From an algorithmic point of view, both methods represent an extension of the classical Krylov subspace  based methods. 

\textbf{Motivation}
One of the main motivations for developing model reduction methods is that the order of the controller and the computation complexity of controller synthesis increase with the number of state variables of the plant model. This curse of dimensionality can be particularly troublesome for hybrid systems. The reason is as follows: A finite-state abstraction of the plant model is acquired in many of the existing control synthesis methods \cite{TabuadaBook}, subsequently one applies discrete-event control synthesis techniques to find a discrete controller for the finite-state abstraction of the plant. Usually, the states of this abstraction are not directly measurable, only some events (transition labels) are. This means that the controller has to contain a copy of the abstracted plant model, in order to be able to estimate the state of the finite-state abstraction of the plant, \cite{TabuadaBook,Wonham3,VardKupfContr}. In addition, the complexity of the control synthesis algorithm is at best polynomial in the number of states of the finite-state abstraction \cite{TabuadaBook,Wonham3,GameBook}. The situation becomes even worse when one considers the case of partial observations, i.e., when not all events (transition labels) of the finite-state abstraction are observable. This can be caused by the nature of the problem \cite{MP:HybIoDADHS09} or by the non-determinism of the abstraction. In this case, the control synthesis algorithm can have exponential complexity, \cite{GameBook,arnold_games_2003,Wonham3}, and the number of the state of the controller can be exponential in the number of the states of the abstraction. Depending on the method used and on the application at hand, the size of the finite-state abstraction can be very large, it could even be exponential in the number of continuous states of the original hybrid model, \cite{TabuadaBook}. In such cases, synthesis or implementation of controller might become very difficult, even for hybrid system of moderate size. Clearly, model reduction algorithms could be useful for such systems.

\textbf{Related work} The possibility of model reduction by moment matching for LSSs was already hinted in \cite{petreczky}, but no details were provided, no efficient algorithm was proposed, and no numerical experiments were done. Note that a naive application of the realization algorithm of \cite{petreczky} yields an algorithm whose computational complexity is exponential. Some results of this paper have appeared in \cite{bastugACC2014}. Main contributions of this paper different from \cite{bastugACC2014} can be summarized as follows: 1) Proofs for the main theorems in \cite{bastugACC2014} are presented. 2) The model reduction framework given in \cite{bastugACC2014} is generalized with the notion of nice selections. Hence, a less conservative framework is built for model reduction of LSSs, which is useful for focusing on the approximation of specific local modes. 3) This generalized framework is used to state a theorem which can be used for matching the input output behavior of a \emph{continuous time} LSS for a \emph{certain} switching sequence, with another LSS of smaller order. In \cite{bastugCDC2014}, the moment matching framework is used for matching the input-output behavior of \emph{discrete time} LSSs with a \emph{certain set} of allowed switching sequences. With respect to \cite{bastugCDC2014}, the main differences are that this paper focuses on the continuous time case and it allows approximation as opposed to exact matching of the input-output behavior. In addition,
the current paper uses the framework of  ``nice selections''. This framework is not only more general, but it has a clear system theoretical interpretation.

In the linear case, model reduction is a mature research area \cite{antoulas}. The subject of model reduction for hybrid and switched systems was addressed in several papers \cite{French1,Zhang20082944,Mazzi1,Chahlaoui,Habets1,China2,China3,Lam1,Kotsalis2,Kotsalis1,6209392,shaker2011}.
Except \cite{Habets1}, the cited papers propose techniques which involve solving certain LMIs, and for this reason, they tend to be applicable only to switched systems for which the continuous subsystems are stable.
In contrast, the approach of this paper works for systems which are unstable. However, this comes at a price, since we are not able to propose analytic error bounds, like the ones for balanced truncation \cite{petreczky2013}. In addition, the time horizon on which the approximation is ``good enough'', depends on the LSS. From a practical point of view, the lack of an analytic error bound and related issues need not be a very serious disadvantage, since it is often acceptable to evaluate the accuracy of the approximation after the reduced model has been computed.


The model reduction algorithm proposed in this paper is similar in spirit to moment matching for linear systems \cite{antoulas,gugercin} and bilinear systems \cite{BilinearMomentMatching2,BilinearMomentMatching3,BilinearMomentMatching5}; however, the details and the system class considered are entirely different. The concept of nice selection of columns (resp. rows) of the reachability (resp. observability) matrix for model reduction of multi input - multi output (MIMO) linear systems appeared in \cite{gugercin}. The method presented in this paper is based on the generalization of this concept to LSSs. In fact, this is seen as another contribution of the present paper. The model reduction algorithm for LPV systems described in \cite{toth2012} is related to the method given in this paper, as it also relies on a realization algorithm and Markov parameters. In turn, the realization algorithms and Markov parameters of LPV systems and LSSs are closely related, \cite{PM12}. However, the algorithm of \cite{toth2012} applies to a different system class (namely LPV systems), and it is not yet clear if it yields a partial realization of the original system considered.

\textbf{Outline}
In Section \ref{sect:prelim}, we fix the notation and terminology of the paper. In Section \ref{sect:lin_switch_def}, we present the formal definition and main properties of LSSs. In Section \ref{sect:markov}, we recall the concept of Markov parameters for linear systems and LSSs, and the problem of model reduction by moment matching. The solution to the moment matching problem for LSSs analogous to the linear case is stated in \ref{sect:Npart}. This solution is generalized and made useful further for LSSs in Section \ref{sect:nice} where also the related algorithm is stated in detail. Finally, in Section \ref{sec:exam} the two methods are illustrated on numerical examples.

%
%

%
%

\section{Preliminaries: notation and terminology}
\label{sect:prelim}

Denote by $\mathbb{N}$ the set of natural numbers including $0$. Denote by $\mathbb{R}_+$ the set $[0,+\infty)$ of nonnegative real numbers. In the sequel, let $PC(\mathbb{R}_+,S)$, with $S$ a topological subspace of an Euclidean space $\mathbb{R}^{n}$, denote the set of \emph{piecewise-continuous and left-continous maps}. That is, $f\in PC(\mathbb{R}_+,S)$ if it has finitely many points of discontinuity on any compact subinterval of $\mathbb{R}_+$, and at any point of discontinuity both the left-hand and right-hand side limits exist, and $f$ is continuous from the left. Moreover, when $S$ is a discrete set it will always be endowed with the discrete topology.

In addition, denote by $AC(\mathbb{R}_+, \mathbb R^n)$ the set of \emph{absolutely continuous maps}, and $L_{loc}(\mathbb{R}_+, \mathbb R^n)$ the set of \emph{Lebesgue measurable maps} which are integrable on any compact interval.

If $M \in \mathbb{R}^{a \times b}$ with $a,b \in \mathbb{N} \backslash \{0\}$ is a real matrix (or vector), $M_{i,:}$ (resp. $M_{:,j}$) denotes the $i$th row of $M$ with $i \in \{1, \dots, a\}$ (resp. $j$th column of $M$ with $j \in \{1, \dots, b\}$). The notation $M_{i,j}$ is used for addressing the entry of $M$ in its $i$th row and $j$th column. Lastly, $e_i$ will be used to denote the $i$th unit vector in the canonical basis for $\mathbb{R}^a$.

\section{Linear switched systems}
\label{sect:lin_switch_def}

In this section, we present the formal definition of linear switched systems and recall a number of relevant definitions. We follow the presentation of \cite{MP:BigArticlePartI,petreczky2013}.

\begin{Definition}[LSS] \label{LSS}
	A continuous time linear switched system (LSS) is a control system of the form
	\begin{subequations}\label{eq:LSSform}
		\begin{align} 
			\frac{d}{dt}x(t)&=A_{\sigma(t)}x(t)+B_{\sigma(t)}u(t),\quad x(t_0)=x_0\label{sdyn} \\
			y(t)&=C_{\sigma(t)}x(t)\label{out} 
		\end{align}
	\end{subequations}
	where $ \sigma \in PC(\mathbb{R}_+, Q)$ is the switching signal, $u \in L_{loc}(\mathbb{R}_+, \mathbb R^m)$ is the input, $x \in AC(\mathbb{R}_+, \mathbb R^n)$ is the state, and $y \in PC(\mathbb{R}_+, \mathbb R^p)$ is the output and $Q=\{1,\dots,D\},~D>0,$ is the set of discrete modes. Moreover, $A_q \in \mathbb{R}^{n \times n}$, $B_q \in \mathbb{R}^{n \times m}$, $C_q \in \mathbb{R}^{p \times n}$ are the matrices of the linear system in mode $q \in Q$, and $x_0$ is the initial state. The notation
	\begin{equation}
		\Sigma=(p,m,n,Q,\{(A_q,B_q,C_q)|q \in Q\},x_0) 
	\end{equation}
	or simply $\Sigma$, are used as short-hand representations for an LSSs of the form \eqref{eq:LSSform}. The number $n$ is the \emph{dimension (order) of $\Sigma$} and will sometimes be denoted by $\dim(\Sigma)$.
\end{Definition}

Next, we present the basic system theoretic concepts for LSSs.

\begin{Definition}\label{def:stateandoutput}
	The \emph{input-to-state} map $X_{\Sigma,x}$ and \emph{input-to-output} map $Y_{\Sigma,x}$ of $\Sigma$ are the maps 
	\begin{align*}
		X_{\Sigma,x}: L_{loc}(\mathbb{R}_+,\mathbb{R}^m) \times PC(\mathbb{R}_+,Q) & \rightarrow AC(\mathbb{R}_+,\mathbb{R}^n); \\
		(u,\sigma) & \mapsto X_{\Sigma,x}(u,\sigma), \\ 
		Y_{\Sigma,x}: L_{loc}(\mathbb{R}_+,\mathbb{R}^m) \times PC(\mathbb{R}_+,Q) & \rightarrow PC(\mathbb{R}_+,\mathbb{R}^p); \\
		(u,\sigma) & \mapsto Y_{\Sigma,x}(u,\sigma).
	\end{align*}
	defined by letting $t\mapsto X_{\Sigma,x}(u,\sigma)(t)$ be the solution to the Cauchy problem \eqref{sdyn} with $t_0=0$ and $x_0=x$, and letting $Y_{\Sigma,x}(u,\sigma)(t)=C_{\sigma(t)}X_{\Sigma,x}(u,\sigma)(t)$ as in \eqref{out}.
\end{Definition}

The input-output behavior of an LSS realization can be formalized as a map
\begin{equation} \label{eq:inputoutputmap}
	f: L_{loc}(\mathbb{R}_+,\mathbb{R}^m) \times PC(\mathbb{R}_+,Q) \rightarrow PC(\mathbb{R}_+,\mathbb{R}^p).
\end{equation}
The value $f(u,\sigma)$ represents the output of the underlying (black-box) system. This system may or may not admit a description by an LSS. Next, we define when an LSS describes (realizes) a map of the form \eqref{eq:inputoutputmap}.

The LSS $\Sigma$ of the form \eqref{eq:LSSform} is a \emph{realization} of an input-output map $f$ of the form \eqref{eq:inputoutputmap}, if $f$ is the input-output map of $\Sigma$ which corresponds to the initial state $x_0$, i.e., $f=Y_{\Sigma,x_0}$. The map $Y_{\Sigma,x_0}$ will be referred to as the \emph{input-output map of} $\Sigma$.

Moreover, we say that the LSSs $\Sigma_1$ and $\Sigma_2$ are \emph{equivalent} if $Y_{\Sigma_1,x_0^1}=Y_{\Sigma_2,x_0^2}$ where $x_0^1$ and $x_0^2$ denote the initial states of $\Sigma_1$ and $\Sigma_2$ respectively. The LSS $\Sigma_{\mathrm m}$ is said to be a \emph{minimal} realization of $f$, if $\Sigma_{\mathrm m}$ is a realization of $f$, and for any LSS $\Sigma$ such that $\Sigma$ is a realization of $f$, $\dim (\Sigma_{\mathrm m}) \le \dim (\Sigma)$. An LSS $\Sigma$ is said to be \emph{observable}, if for any two states $x_{1}\neq x_{2} \in \mathbb{R}^n$, $Y_{\Sigma,x_1} \neq Y_{\Sigma,x_2}$.

Let $\mathrm{Reach}_{x_0}(\Sigma)\subseteq \mathbb{R}^n$ denote the reachable set of the LSS $\Sigma$ relative to the initial condition $x_0\in \mathbb{R}^n$, i.e., $\mathrm{Reach}_{x_0}(\Sigma)$ is the image of the map $(u,q,t)\mapsto X_{\Sigma,x_0}(u,q)(t)$. The LSS $\Sigma$ is said to be \emph{span reachable} if the linear span of states which are reachable from the initial state is $\mathbb{R}^n$, i.e., if $\SPAN\{x \mid x \in \mathrm{Reach}_{x_0}(\Sigma) \}=\mathbb{R}^n$. Span-reachability, observability and minimality are related as follows.
\begin{Theorem}[\cite{MP:BigArticlePartI}]
	An LSS $\Sigma$ is a minimal realization of $f$ if and only if it is a realization of $f$, and it is span-reachable and observable. If $\Sigma_1 = (p,m,n,Q,\{(A_q,B_q,C_q)|q \in Q\},x_0)$ and $\Sigma_{2}= (p,m,n,Q,\{(A_q^a,B_q^a,C_q^a)|q \in Q\},x_0^a)$ are two minimal realizations of $f$, then they are \emph{isomorphic}, i.e., there exists a non-singular $S \in \mathbb{R}^{n \times n}$ such that
	\begin{equation*}
		Sx_0=x_0^a \mbox{ and } \forall q \in Q:
		A^{a}_{q}S=SA_{q},  B_{q}^{a}=SB_{q},
		C_{q}^{a}S=C_{q}.
	\end{equation*}
\end{Theorem}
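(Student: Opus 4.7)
The plan is to follow the classical realization-theoretic template: reduce minimality to the combination of span-reachability and observability, then establish isomorphism of any two such realizations by constructing a direct intertwining between their reachable states.

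For the forward direction, suppose $\Sigma$ is minimal. I would first argue that the reachable span $\mathcal{R} = \SPAN\{x \mid x \in \mathrm{Reach}_{x_0}(\Sigma)\}$ is $A_q$-invariant for every $q \in Q$, contains the image of every $B_q$, and contains $x_0$ itself (by taking the constant trajectory at $t=0$). Consequently, restricting the matrices to $\mathcal{R}$ yields an LSS that realizes the same input-output map and has dimension $\dim \mathcal{R}$; minimality then forces $\mathcal{R} = \mathbb{R}^n$. A dual argument handles observability: the unobservable subspace $\mathcal{U} = \{z \in \mathbb{R}^n \mid Y_{\Sigma,x_0+z} = Y_{\Sigma,x_0}\}$ is $A_q$-invariant and contained in $\ker C_q$ for every $q$, so passing to the quotient $\mathbb{R}^n / \mathcal{U}$ produces a realization of dimension $n - \dim \mathcal{U}$. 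Minimality forces $\mathcal{U} = \{0\}$, which by linearity of the state-to-output dependence is equivalent to the observability condition stated in the paper.

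For the reverse direction and the isomorphism claim, let $\Sigma_1$ and $\Sigma_2$ be realizations of the same $f$, both span-reachable and observable. I would define a candidate linear map $S$ on reachable states of $\Sigma_1$ by setting $S\, X_{\Sigma_1,x_0^1}(u,\sigma)(t) = X_{\Sigma_2,x_0^2}(u,\sigma)(t)$ and extending by linearity. Span-reachability of $\Sigma_1$ guarantees a finite spanning subset of such states, so $S$ extends to all of $\mathbb{R}^{n_1}$ once well-definedness is verified. Well-definedness is the crux: if a linear combination of reachable states of $\Sigma_1$ vanishes, the corresponding combination of reachable states of $\Sigma_2$ must also vanish. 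I would prove this by concatenating any reachable trajectory with a suffix input $v$ and switching signal $\tau$, noting that equality of $f = Y_{\Sigma_1,x_0^1} = Y_{\Sigma_2,x_0^2}$ forces the two ``shifted'' output maps at the reached states to coincide, and then invoking observability of $\Sigma_2$ to conclude that the combined state in $\Sigma_2$ vanishes. Swapping the roles of $\Sigma_1$ and $\Sigma_2$ shows $S$ is bijective, so $n_1 = n_2$. The intertwining identities $Sx_0^1 = x_0^2$, $SA_q = A_q^a S$, $SB_q = B_q^a$, and $C_q^a S = C_q$ then follow by evaluating $S$ at $t=0$ for the initial state, and by differentiating the state and output equations on intervals of constancy of $\sigma$ using the dynamics in each mode $q$. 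Finally, minimality of any span-reachable and observable realization follows because isomorphism with a known minimal realization preserves dimension.

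The main obstacle I expect is the well-definedness of $S$: translating equality of future outputs into a genuine equality of states requires exploiting the freedom to choose piecewise-constant switching signals with arbitrary dwell times, so that repeated differentiation at $t=0^+$ over all finite mode sequences extracts every product $C_{q_0}A_{q_1}\cdots A_{q_k}$. Observability of $\Sigma_2$ is exactly the statement that these products jointly separate points, so it closes the argument; but verifying that the differentiation step is licit (that one may switch the order of differentiation with respect to several dwell times and evaluate at zero) is the technically delicate part. A secondary bookkeeping point is the reduction from the infinite family of reachable states to a finite spanning basis, which is free from finite-dimensionality but must be invoked to make the definition of $S$ as a linear map on $\mathbb{R}^{n_1}$ rigorous.
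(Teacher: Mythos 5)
This theorem is stated in the paper with a citation to the realization-theory reference and is not proved there, so there is no in-paper proof to compare against; your sketch reconstructs the standard argument from that literature (restriction to the reachable span and quotient by the unobservable subspace for one direction, and the intertwining map between reachable states, made well-defined by observability via the products $C_q A_v$, for the isomorphism) and is correct in outline. The delicate points you flag --- extracting $C_{q}A_{q_k}\cdots A_{q_1}$ by differentiating with respect to dwell times, justified by analyticity of the generalized kernel representation --- are exactly the ones the cited work handles, so no essential idea is missing.
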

Moreover, if $\Sigma$ is a realization of $f$, then there exists an algorithm for computing from $\Sigma$ a minimal realization $\Sigma_m$ of $f$, \cite{MP:BigArticlePartI,petreczky2013}. Hence, in the sequel, unless stated otherwise we will tacitly assume that the LSSs are minimal realizations of their input-output maps.

\section{Background on Markov parameters and moment matching}
\label{sect:markov}

In this section, we recall the concepts of Markov parameters and moment matching for linear systems and draw the analogy with the linear switched case. We will begin by recalling model reduction by moment matching for linear systems \cite{antoulas}.

\subsection{Markov parameters and moment matching for linear systems}

Recall that a potential input-output map of a linear system is an affine map $f: L_{loc}(\mathbb{R}_+,\mathbb{R}^m) \rightarrow PC(\mathbb{R}_+,\mathbb{R}^p)$ for which there exist analytic functions $K:\mathbb{R}_{+} \rightarrow \mathbb{R}^{p}$ and $G:\mathbb{R}_{+} \rightarrow \mathbb{R}^{p \times m}$, such that
\begin{equation}
	\label{rev1} 
	f(u)(t)=K(t)+\int_{0}^{t} G(t-s)u(s)ds, \forall t \in \mathbb{R}_{+}
\end{equation}
for all $u \in  L_{loc}(\mathbb{R}_+,\mathbb{R}^m)$. Existence of such a pair of maps is a necessary condition for $f$ to be realizable by a linear system. Indeed, consider a linear system
\begin{equation}
	\label{rev2}
	\Sigma\left\{
	\begin{split}
		& \dot x(t) = Ax(t)+Bu(t) \mbox{, where } x(0)=x_0 \\
		& y(t)=Cx(t)
	\end{split}\right.
\end{equation}
where $A$, $B$ and $C$ are $n \times n$, $n \times m$ and $p \times n$ real matrices and $x_0 \in \mathbb{R}^n$ is the initial state. The map $f$ is said to be \emph{realized by} $\Sigma$, if the output response at time $t$ of $\Sigma$ to any input $u$ equals $f(u)(t)$. This is the case if and only if $f$ is of the form \eqref{rev1} with $K(t)=Ce^{At}x_0$ and $G(t)=Ce^{At}B$.

If $f$ is of the form \eqref{rev1}, then $f$ is uniquely determined by the analytic functions $K$ and $G$. In turn, these functions are uniquely determined by their Taylor-coefficients at zero. Consequently, it is reasonable to approximate $f$ by the function
\[
\bar{f}(u)(t)=\bar{K}(t)+\int^{t}_{0} \bar{G}(t-s)u(s)ds,
\]
such that the first $N+1$ Taylor series coefficients of $\bar{K},\bar{G}$ and $K,G$ coincide, i.e., $\frac{d^k}{dt} K(t)|_{t=0}=\frac{d^k}{dt} \bar{K}(t)|_{t=0}$ and $\frac{d^k}{dt} G(t)|_{t=0}=\frac{d^k}{dt} \bar{G}(t)|_{t=0}$ for all $k=0,\ldots, N$.
The larger $N$ is, the more accurate the approximation is expected to be. One option is to choose $N$ and $\bar{f}$ in such a way that $\bar{f}$ would be realizable by an LTI (linear time invariant) state-space representation. In this case, this LTI state-space representation is called an \emph{N-partial realization of $f$}. Specifically, define the $k$th Markov parameter of $f$ as follows
\begin{equation}
	\label{rev1:markov}
	M_{k}=\begin{bmatrix} \frac{d^k}{dt^k} K(t)|_{t=0} ,
		\frac{d^k}{dt^k} G(t)|_{t=0} \end{bmatrix}, k \in \mathbb{N}.
\end{equation}
Note that if $K=0$ and $H(s)$ is the Laplace transform of $G$, then the Markov parameters are the coefficients of the Laurent expansion of $H(s)$, i.e., $H(s)=\sum_{i=1}^{\infty} M_is^{-i}$ for all $s \in \mathbb{C}$, $s \ne 0$. For the general case, if the linear system \eqref{rev2} is a realization of $f$, then the Markov-parameters can be expressed as $M_{k}=CA^k\begin{bmatrix} x_0 & B \end{bmatrix}$, for all $k \in \mathbb{N}$. Moreover, the linear system \eqref{rev2} is an $N$-partial realization of $f$, if $M_{k}=CA^k\begin{bmatrix} x_0 & B \end{bmatrix}$, $k=0,\ldots,N$. It can also be shown that if $f$ has a realization by an LTI system of order $N$, then the linear system \eqref{rev2} is a realization of $f$ if and only if it is a $2N-1$ partial realization of $f$, i.e., in this case $f$ is uniquely characterized by \emph{finitely many} Markov parameters.

The main idea behind model reduction of LTI systems using moment matching is as follows. Consider an LTI system $\Sigma$ of the form \eqref{rev2} and fix $N > 0$. Let $f$ be the input-output map of $\Sigma$ from the initial state $x_0$. Find an LTI system $\bar{\Sigma}$ of order $r$ strictly less than $n$ such that $\bar{\Sigma}$ is an $N$-partial realization of $f$. A relation between $r$ and $N$ will be discussed later in the paper.

There are several equivalent ways to interpret the relationship between the LTI systems $\Sigma$ and $\bar{\Sigma}$. Assume that the system matrices of $\bar{\Sigma}$ are $\bar{A},\bar{B},\bar{C}$ and the initial state of $\bar{\Sigma}$ is $\bar{x}_0$. If $\bar{\Sigma}$ is a solution to the moment matching problem described above, then the first $N+1$ coefficients of the Laurent series expansion of the transfer functions $C(sI-A)^{-1}\begin{bmatrix} x_0 & B \end{bmatrix}$ and $\bar{C}(sI-\bar{A})^{-1}\begin{bmatrix} \bar{x}_0 & \bar{B} \end{bmatrix}$ coincide. Yet another way to interpret the LTI system $\bar{\Sigma}$ is to notice that $CA^k\begin{bmatrix} x_0 & B \end{bmatrix} = \bar{C}\bar{A}^k\begin{bmatrix} \bar{x}_0 & \bar{B} \end{bmatrix}$ for all $k=0,\ldots,N$.

\subsection{Markov parameters and moment matching for linear switched systems}

In this paper, we will extend the idea of moment matching from LTI systems to LSSs. To this end, we will use the generalization of Markov parameters to the input-output maps of LSSs.

\begin{Notation} \label{Notation1}
	Consider a finite non-empty set $Q$ with $D$ elements, which will be called the \emph{alphabet}. Denote by $Q^*$ the set of finite sequences of elements of $Q$. The elements of $Q^*$ are called \emph{strings} or \emph{words} over $Q$ and any set $L \subseteq Q^*$ is called a \emph{language} over $Q$. Each non-empty word $w$ is of the form $w=q_1q_2 \cdots q_k$ for some $q_1,q_2,\dots,q_k \in Q$. The element $q_i$ is called the \emph{$i$th letter of $w$}, for $i=1,2,\dots,k$, and $k$ is called the \emph{length} of $w$. The \emph{empty sequence (word)} is denoted by $\varepsilon$. The length of word $w$ is denoted by $|w|$; note that $|\varepsilon|=0$. The set of non-empty words is denoted by $Q^+$, i.e., $Q^+=Q^*\backslash\{\varepsilon\}$. The subset of $Q^*$ containing all the words of length at most (resp. at least) $N \in \mathbb{N}$ will be denoted by $Q^{\leq N}$ (resp. $Q^{\geq N}$). The \emph{concatenation} of word $w \in Q^*$ with $v \in Q^*$ is denoted by $wv$: If $v=v_{1}v_2 \cdots v_{k}$, and  $w=w_{1}w_2 \cdots w_{m}$, $k > 0, m >0$, then $vw=v_{1}v_2 \cdots v_{k}w_{1}w_2 \cdots w_{m}$. If $v=\epsilon$, then $wv=w$; if $w=\epsilon$, then $wv=v$. For simplicity, the finite set $Q$ will be identified with its index set, that is $Q=\{1,2,\dots,D\}$.
\end{Notation}

Next consider an input-output map $f$ of the form \eqref{eq:inputoutputmap}. Notice that the restriction to a finite interval $[0,t]$ of any $\sigma \in PC(\mathbb{R}_+,Q)$ can be interpreted as finite sequence of elements from $Q \times \mathbb{R}_+$ of the form
\begin{equation} \label{eq:switching_sequence}
	\mu=(q_1,t_1)(q_2,t_2)\cdots(q_k,t_k)
\end{equation}
where $q_1,\dots,q_k \in Q$ and $t_1,\dots,t_k \in \mathbb{R_+}\backslash \{0\}$, $t_1+\cdots +t_k = t$, such that for all $s \in[0,t]$
\begin{equation} \label{eq:switch_seq_sigma_def}
\sigma(s)=\left\{\begin{array}{rl}
q_1 & \mbox{ if } s \in [0,t_1] \\
q_2 & \mbox{ if } s \in (t_1,t_1+t_2] \\
\vdots \\
q_i & \mbox{ if } s \in (t_1+\cdots t_{i-1} ,t_1+\cdots + t_{i-1}+t_i] \\
\vdots \\
q_k & \mbox{ if } s \in (t_1+\cdots t_{k-1} ,t_1+\cdots + t_{k-1}+t_k] \\
\end{array}\right.
\end{equation}
Clearly this encoding is not one-to-one, since if $q_{i-1}=q_i$ for any $i \in \{ 2,\ldots,k \}$ and $\mu=(q_1,t_1)(q_2,t_2)\cdots(q_k,t_k)$ corresponds to $\sigma|_{[0,t]}$, then $(q_1,t_1)(q_2,t_2)\cdots (q_{i-1},t_{i-1}+t_i)(q_{i+1},t_{i+1}) \cdots (q_k,t_k)$ also corresponds to $\sigma|_{[0,t]}$.

From \cite{MP:BigArticlePartI}, it follows that a necessary condition for $f$ to be realizable by an LSS is that $f$ has a \emph{generalized kernel representation}. For a detailed definition of a generalized kernel representation of $f$, we refer the reader to \cite[Definition 19]{MP:BigArticlePartI}.
\footnote{Note that in \cite{MP:BigArticlePartI} the concept of generalized kernel representation was defined for families of input-output maps. In order to apply the definition and results of \cite{MP:BigArticlePartI} to the current paper, one has to take a family of input-output maps $\Phi$ which is the family consisting of one single map $f$, i.e., $\Phi=\{f\}$. In addition, in \cite{MP:BigArticlePartI} the input-output maps were defined not for switching signals from $PC(\mathbb{R}_+,Q)$, but for switching sequences of the form \eqref{eq:switching_sequence}, where the times $t_1,\ldots,t_k$ were allowed to be zero. However, by using the correspondence between switching signals from $PC(\mathbb{R}_+,Q)$ and switching sequences \eqref{eq:switching_sequence}, and by using the properties (2) and (3) of \cite[Definition 19]{MP:BigArticlePartI}, we can easily adapt the definition and results from \cite{MP:BigArticlePartI} to the setting of the current paper.}

For our purposes, it is sufficient to recall that if $f$ has a generalized kernel representation, then there exists a unique family of analytic functions $K^f_{q_1,\ldots,q_k}:\mathbb{R}_{+}^k \rightarrow \mathbb{R}^p$ and $G^f_{q_1,\ldots,q_k}:\mathbb{R}_{+}^k \rightarrow \mathbb{R}^{p \times m}$, $q_1,\ldots,q_k \in Q$, $k \ge 1$, such that for all $(u,\sigma) \in L_{loc}(\mathbb{R}_{+},\mathbb{R}^{m}) \times PC(\mathbb{R}_+,Q)$, $t > 0$ and for any $\mu=(q_1,t_1)(q_2,t_2)\cdots(q_k,t_k)$ which corresponds to $\sigma$,
\begin{equation} \label{eq:inputoutput}
	\begin{aligned}
		& f(u,\sigma)(t)= K^f_{q_1q_2 \cdots q_k}(t_1,t_2, \dots, t_k)+ \\ 
		& \sum_{i=1}^{k} \int_{0}^{t_i}G^f_{q_iq_{i+1} \cdots q_k}(t_i-s,t_{i+1}, \dots, t_k)u \left( s+ \sum_{j=1}^{i-1}t_j \right)ds,
	\end{aligned}
\end{equation}
and the functions $\{K^f_{q_1\cdots q_k}, G^f_{q_1\cdots q_k} \mid q_1,\ldots,q_k \in Q, k \ge 0\}$ satisfy a number of technical conditions, see \cite[Definition 19]{MP:BigArticlePartI} for details.

From \cite{MP:BigArticlePartI}, it follows that there is a one-to-one correspondence between $f$ and the family of maps $\{K^f_{q_1\cdots q_k}, G^f_{q_1\cdots q_k} \mid q_1,\ldots,q_k \in Q, k \ge 0\}$. The maps $\{K^f_{q_1\cdots q_k}, G^f_{q_1\cdots q_k} \mid q_1,\ldots,q_k \in Q, k \ge 0\}$ play a role which is similar to the role of the functions $K$ and $G$ in the LTI case. If $f$ has a realization by an LSS \eqref{LSS}, then the functions $K^f_{q_1q_2 \cdots q_k}(t_1,t_2, \dots, t_k)$ and $G^f_{q_1q_2 \cdots q_k}(t_1,t_2, \dots, t_k)$ satisfy
\begin{align*}
	& K^f_{q_1q_2 \cdots q_k}(t_1,t_2, \dots, t_k)= C_{q_k}e^{A_{q_k}t_k}e^{A_{q_{k-1}}t_{k-1}} \cdots e^{A_{q_1}t_1}x_0 \\
	& G^f_{q_1q_2 \cdots q_k}(t_1,t_2, \dots, t_k)= C_{q_k}e^{A_{q_k}t_k}e^{A_{q_{k-1}}t_{k-1}} \cdots e^{A_{q_1}t_1}B_{q_1}.
\end{align*}
We can now define the Markov parameters of $f$ as follows.

\begin{Definition}[Markov parameters] \label{MarkovParameters}
	The Markov parameters of $f$ are the values of the map
	\[ M^f:Q^{*} \rightarrow \mathbb{R}^{\QNUM p \times (m\QNUM + 1)}, \]
	defined by
	\[ M^f(v)=\begin{bmatrix}
	S_0(v1) & S(1v1) & \cdots & S(\QNUM v1) \\
	S_0(v2) & S(1v2) & \cdots & S(\QNUM v2) \\
	\vdots  & \vdots   & \cdots & \vdots \\
	S_0(v\QNUM) & S(1v\QNUM) & \cdots & S(\QNUM v \QNUM) \\
	\end{bmatrix}, \]
	where the vectors $S_0(vq) \in \mathbb{R}^p$ and the matrices $S(q_0vq) \in \mathbb{R}^{p \times m}$ are defined as follows. For all $q_0,q \in Q$,
	\begin{align*}
		& S_0(q)= K^f_{q}(0) \mbox{ and }  S(q_0q)= G^f_{q_0q}(0,0).
	\end{align*}
	and for all $q_0,q \in Q$, $v \in Q^*$, $v \neq \varepsilon$ by
	\begin{align*}
		& S_0(vq)= \left. \frac{d}{dt_1} \cdots \frac{d}{dt_k} K^f_{q_1 \cdots q_k q}(t_1, \dots, t_k, 0) \right| _{t_1=t_2= \cdots =t_k=0} \\
		& S(q_0vq)=\left. \frac{d}{dt_1} \cdots \frac{d}{dt_k} G^f_{q_0q_1 \cdots q_k q}(0,t_1, \dots, t_k, 0) \right| _{t_1=t_2= \cdots =t_k=0}
	\end{align*}
	where $v=q_1q_2 \cdots q_k$, $k \geq 0$, $q_1,q_2, \dots, q_k \in Q$.
\end{Definition}

That is, the Markov parameters of $f$ are certain partial derivatives of the functions $\{K^f_{q_1\cdots q_k}, G^f_{q_1\cdots q_k} \mid q_1,\ldots,q_k \in Q, k \ge 0\}$. From \cite{MP:BigArticlePartI}, it follows that the Markov parameters $\{M^f(v)\}_{v \in Q^*}$ determine the maps $\{K^f_{q_1\cdots q_k}, G^f_{q_1\cdots q_k} \mid q_1,\ldots,q_k \in Q, k \ge 0\}$, and hence $f$, uniquely.
If $f$ has a realization by an LSS $\Sigma$ of the form \eqref{LSS}, then the Markov-parameters of $f$ can be expressed as products of the matrices of $\Sigma$. In order to present the corresponding formula, we will use the following notation.

\begin{Notation} \label{Notation2}
	Let $w=q_1q_2 \cdots q_k \in Q^*$, $q_1,\dots,q_k \in Q$, $k>0$ and $A_{q_i} \in \mathbb{R}^{n \times n}$, $i=1,\dots,k$. Then the matrix $A_w$ is defined as
	\begin{equation} 
		A_w=A_{q_k}A_{q_{k-1}}\cdots A_{q_1}.
	\end{equation}
	If $w= \varepsilon$, then $A_\varepsilon$ is the identity matrix.
\end{Notation}


From \cite{MP:BigArticlePartI}, it follows that an LSS \eqref{LSS} is a realization of the map $f$ if and only if $f$ has a generalized kernel representation and $S_0(vq)=C_qA_vx_0 \mbox{ and } S(q_0vq)=C_qA_vB_{q_0}$ for all $v \in Q^{*}$, or in more compact form
\begin{equation}
	\label{eq:markov1}
	M^f(v)=\widetilde{C}A_v\widetilde{B}, \mbox{ } \forall v \in Q^{*}
\end{equation}
with $\widetilde{C}=\begin{bmatrix} C_1^{\mathrm{T}} & \cdots & C_{\QNUM}^{\mathrm{T}} \end{bmatrix}^{\mathrm{T}}$ and $\widetilde{B}= \begin{bmatrix} x_0 & B_1 & B_2 & \cdots & B_D \end{bmatrix}$.
The main idea behind moment matching for LSSs (more precisely, for their input-output maps), is as follows: approximate $f$ by another input-output map $\bar{f}$, such that some of the Markov parameters of $f$ and $\bar{f}$ coincide. One obvious choice is to say that $M^{f}(v)=M^{\bar{f}}(v)$ for all $v \in Q^{*}$, $|v| \le N$ for some $N$. This approach will be explained in detail in the next section after formally defining $N$-partial realizations for an LSS. The other approach is based on the concept of nice selections of the columns (resp. rows) of the partial reachability (resp. observability) matrix of an LSS, and it will be presented in Section \ref{sect:nice}. The approach based on nice selections is less conservative and, as seen in Section \ref{sect:nice}, it can be used for matching the input output behavior of two LSSs along a certain switching sequence.

\section{Model reduction by $N$ or $2N$-partial realizations} \label{sect:Npart}

In this section, the aim is to present an efficient  model reduction algorithm which transforms an LSS $\Sigma$ into an LSS $\bar{\Sigma}$ such that $\dim (\bar{\Sigma}) \le \dim (\Sigma)$ and some number of Markov parameters of $\Sigma$ and $\bar{\Sigma}$ are equal. Firstly, we will formally define the concept of $N$-partial realizations and state the problem taken at hand in this section.


\begin{Definition}[$N$-partial realization] \label{def:Npartial}
	The LSS \eqref{LSS} is called \emph{$N$-partial realization} of $f$, if
	\[ M^f(v)=\widetilde{C}A_v\widetilde{B} \mbox{ } \forall v \in Q^{*}, |v| \le N: \]
	with $\widetilde{C}=\begin{bmatrix} C_1^{\mathrm{T}} & \cdots & C_{\QNUM}^{\mathrm{T}} \end{bmatrix}^{\mathrm{T}}$ and $\widetilde{B}=\begin{bmatrix} x_0 & B_1 & B_2 & \cdots & B_{\QNUM} \end{bmatrix}$.
\end{Definition}

If $\Sigma$ is of the form \eqref{LSS} and $Y_{\Sigma,x_0}$ is the input-output map of $\Sigma$, then the concept of $N$-partial realization can be interpreted as follows: $\Sigma$ is an $N$-partial realization of $f$, if those Markov parameters of $f$ and $Y_{\Sigma,x_0}$ which are indexed by words of length at most $N$ coincide. The analogous (to the linear case) problem of model reduction by moment matching for LSSs can now be formulated as follows.

\begin{Problem} \label{prob:momentmatching}
	\emph{($N$-Moment matching problem for an LSS).}
	Let $\Sigma$ be an LSS of the form \eqref{LSS} and let $f=Y_{\Sigma,x_0}$ be its input-output map. Fix $N \in \mathbb{N}$. Find an LSS $\bar{\Sigma}$ such that $\dim (\bar{\Sigma}) < \dim (\Sigma)$ and $\bar{\Sigma}$ is an $N$-partial realization of $f=Y_{\Sigma,x_0}$.
\end{Problem}

An $N$-partial realization $\bar{\Sigma}$ of $f$ means that all the partial derivatives of order at most $N$ of $\{ K^{f}_{q_1\cdots q_k}, G^{f}_{q_1\cdots q_k} \mid q_1,\ldots,q_k \in Q, k \ge 0 \}$ and of $\{ K^{\bar{f}}_{q_1\cdots q_k}, G^{\bar{f}}_{q_1\cdots q_k} \mid q_1,\ldots,q_k \in Q, k \ge 0 \}$ coincide, where $\bar{f}=Y_{\bar{\Sigma},\bar{x}_0}$. Intuitively, this will mean that for any input and switching signal $(u,\sigma) \in L_{loc}(\mathbb{R}_+,\mathbb{R}^m) \times PC(\mathbb{R}_+,Q)$, the outputs $f(u,\sigma)(t)$ and $\bar{f}(u,\sigma)(t)$ are close, for small enough $t$. In fact, this approach is the direct analogue of the moment matching methods for linear systems and it has a system theoretical interpretation. Namely, the following corollary of \cite[Theorem 4]{petreczky} clarifies this interpretation by stating how many Markov parameters of a map $f$ must be matched by an LSS $\bar{\Sigma}$, for it to be a realization of $f$. Note that there is a trade off between the choice of $N$ and the dimension $\Sigma$.

\begin{Corollary} \label{cor:rel_N_n}
	Assume that $\Sigma$ is a minimal realization of $f$ and $N$ is such that $2\dim (\Sigma)-1 \le N$. Then for any LSS $\bar{\Sigma}$ which is an $N$-partial realization of $f$, $\bar{\Sigma}$ is also a realization of $f$ and $\dim (\Sigma) \le \dim (\bar{\Sigma})$.
\end{Corollary}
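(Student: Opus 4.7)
The plan is to reduce the corollary to the existing partial realization theorem of \cite{petreczky} (Theorem 4 cited in the statement) together with the Hankel-matrix characterization of minimal dimension for LSSs established in \cite{MP:BigArticlePartI}. The key insight is that matching Markov parameters indexed by words of length at most $2n-1$, where $n = \dim(\Sigma)$, is already enough to pin down the entire family $\{M^f(v)\}_{v\in Q^*}$ when a minimal realization of dimension $n$ exists.

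First I would introduce the Hankel matrix $H_f$ of $f$, whose rows and columns are indexed by words $v,w\in Q^*$ and whose $(v,w)$-block equals $M^f(wv)$. A standard result (from \cite{MP:BigArticlePartI}) says $\Rank H_f = \dim(\Sigma_{\mathrm m})$ for any minimal realization $\Sigma_{\mathrm m}$ of $f$, and moreover if $\Sigma'$ is any realization of $f$, then $\Rank H_f \le \dim(\Sigma')$. Then I would define the truncated Hankel matrix $H_f^{N}$ built from Markov parameters indexed by words of length at most $N$, and invoke the partial realization result that for $N \ge 2n-1$, $\Rank H_f^{N} = \Rank H_f = n$. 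This is the classical behaviour-theoretic argument: once the truncated Hankel matrix has rank $n$ and ``saturates'', extending it cannot produce new linearly independent rows or columns, so rank is preserved.

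Next, let $\bar{f} = Y_{\bar{\Sigma},\bar{x}_0}$. Since $\bar{\Sigma}$ is an $N$-partial realization of $f$, we have $M^f(v) = M^{\bar{f}}(v)$ for every $v\in Q^*$ with $|v|\le N$, and in particular $H_f^{N} = H_{\bar{f}}^{N}$. By Theorem 4 of \cite{petreczky} (the partial realization theorem for LSSs), equality of the Markov parameters on all words of length at most $2n-1$ combined with the existence of an $n$-dimensional realization forces $M^f(v) = M^{\bar{f}}(v)$ for every $v\in Q^*$. Using the one-to-one correspondence between $f$ and its Markov parameters recorded earlier in Section~\ref{sect:markov}, we conclude $f = \bar{f}$, i.e.\ $\bar{\Sigma}$ is a full realization of $f$.

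Finally, since $\bar{\Sigma}$ realizes $f$, the general lower bound $\dim(\bar{\Sigma}) \ge \Rank H_f$ applies, and $\Rank H_f = n = \dim(\Sigma)$ by minimality of $\Sigma$; hence $\dim(\Sigma) \le \dim(\bar{\Sigma})$. The only non-routine step is the ``rank saturation'' argument at length $2n-1$: concretely, one shows that every row of $H_f$ indexed by a word of length $\ge n$ is a linear combination of rows indexed by words of length $<n$, and symmetrically for columns; this shift-invariance argument is exactly what Theorem 4 of \cite{petreczky} packages, so the corollary itself requires no new computation beyond invoking that theorem and the Hankel-rank dimension bound.
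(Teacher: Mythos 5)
Your overall route---reduce everything to the partial realization theorem of \cite{petreczky} plus the Hankel-rank characterization of minimal dimension from \cite{MP:BigArticlePartI}---is exactly what the paper intends; the paper gives no separate argument and simply presents the statement as a corollary of \cite[Theorem 4]{petreczky}. Your second conclusion, $\dim(\Sigma)\le\dim(\bar{\Sigma})$, is fine and in fact needs less than you use: since $N\ge 2n-1$, the square corner of the Hankel matrix of $\bar{f}$ indexed by rows and columns of length at most $n-1$ consists of Markov parameters $M^{\bar{f}}(wv)$ with $|wv|\le 2n-2\le N$, hence coincides with the corresponding corner for $f$ and already has rank $n$; the bound $\dim(\bar{\Sigma})\ge\Rank H_{\bar{f}}\ge n$ follows without first establishing $\bar{f}=f$.

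The gap is in your central step: the claim that agreement of the Markov parameters on all words of length at most $2n-1$, together with the existence of an $n$-dimensional realization of $f$, forces $M^f(v)=M^{\bar{f}}(v)$ for all $v\in Q^*$. The rank-saturation (shift-invariance) argument you invoke shows that long-indexed rows of $H_f$ are combinations of short-indexed rows \emph{of $H_f$}; it says nothing about $H_{\bar{f}}$, whose rank saturates only at length $\dim(\bar{\Sigma})-1$, which may exceed $n-1$. Concretely, in the one-mode case ($D=1$, i.e.\ a linear system) take $f$ with Markov parameters $M_k=1$ for all $k$ (minimal realization $A=B=C=1$, $x_0=0$, so $n=1$ and $N=1$ suffices) and let $\bar{\Sigma}$ have dimension $2$ with $\bar{A}=\left[\begin{smallmatrix}0&1\\0&0\end{smallmatrix}\right]$, $\bar{B}=(1,1)^{\mathrm{T}}$, $\bar{C}=(1,0)$, $\bar{x}_0=0$: then $\bar{M}_0=\bar{M}_1=1$ but $\bar{M}_2=0\ne 1$, so $\bar{\Sigma}$ is a $1$-partial realization of $f$ that does not realize $f$. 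Hence your step (and, as literally stated, the first conclusion of the corollary) needs an additional hypothesis such as $\dim(\bar{\Sigma})\le\dim(\Sigma)$---the only case relevant for model reduction, and the one the surrounding discussion in the paper actually appeals to---or a precise statement of \cite[Theorem 4]{petreczky} that supplies it. With such a bound, saturation holds for both Hankel matrices inside the matched window and your argument goes through.
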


That is, if we choose $N$ too high, namely if we choose any $N$ such that $N \geq 2n-1$, where $n$ is the dimension of a minimal LSSs realization of $f$, then there will be no hope of finding an LSS which is an $N$-partial realization of the original input-output map, and whose dimension is lower than $n$.

In order to solve Problem \ref{prob:momentmatching}, one could consider applying the partial realization algorithm \cite{petreczky}. In a nutshell, \cite{petreczky} defines finite \emph{Hankel matrices} and proposes a Kalman-Ho like realization algorithm based on the factorization of the Hankel matrix, \cite[Algorithm 1]{petreczky}. The problem with this naive approach is that it involves explicit construction of Hankel matrices, whose size is exponential in $N$. Consequently, the application of the partial realization algorithm would yield a model reduction algorithm whose memory-usage and run-time complexity is exponential. In the next section, we present a model reduction algorithm which yield a partial realization of the input-output map of the original system, and which does not involve the explicit computation of the Hankel matrix.



In the sequel, the image (column space) of a real matrix $M$ is denoted by $\IM (M)$ and $\Rank (M)$ is the dimension of $\IM (M)$. 


We will start with presenting the following definitions.

\begin{Definition} \label{ObservabilityMat}
	\emph{(Partial unobservability space).}
	The partial unobservability space $\mathscr{O}_N$ of $\Sigma$ up to words of length $N$ is defined as follows:
	\begin{equation}
		\mathscr{O}_N(\Sigma)= \bigcap_{v \in Q^{\leq N}} \ker(\widetilde{C}A_v).
	\end{equation}
\end{Definition}

In the rest, we will denote $\mathscr{O}_N(\Sigma)$ by $\mathscr{O}_N$ if $\Sigma$ is clear from the context. It is not difficult to see that $\mathscr{O}_0=\bigcap_{q \in Q} \ker (C_q)$ and for any $N > 0$, $\mathscr{O}_{N}=\mathscr{O}_{0} \cap \bigcap_{q \in Q} \ker (\mathscr{O}_{N-1}A_q)$. From \cite{Sun:Book,MP:BigArticlePartI}, it follows that $\Sigma$ is observable if and only if $\mathscr{O}_{N}=\{0\}$ for all $N \ge n-1$.

\begin{Definition}[Partial reachability space] \label{ReachabilityMat}
	The partial reachability space $\mathscr{R}_N$ of $\Sigma$ up to words of length $N$ is defined as follows:
	\begin{equation}
		\mathscr{R}_N(\Sigma)= \SPAN \{ \IM(A_v \widetilde{B}) \mid v \in Q^{\leq N} \}.
	\end{equation}
\end{Definition}

In the rest, we will denote $\mathscr{R}_N(\Sigma)$ by $\mathscr{R}_N$ if $\Sigma$ is clear from the context. It is easy to see that $\mathscr{R}_0=\IM (\widetilde{B})$ and $\mathscr{R}_{N}=\IM (\widetilde{B})+\sum_{q \in Q} \IM (A_q\mathscr{R}_{N-1})$, for $N > 0$ (note that here the summation operator must be interpreted as the Minkowski sum). It follows from \cite{MP:BigArticlePartI,Sun:Book} that $\Sigma$ is span-reachable if and only if $\dim (\mathscr{R}_N)=n$ for all $N \ge n-1$.

Given the definition of partial observability / reachability spaces, one can define the corresponding matrix representations $O_N$ and $R_N$ such that $\ker (O_N)= \mathscr{O_N}$ and $\IM(R_N)=\mathscr{R}_N$, and hence the partial Hankel matrix $H_{N,N}$ of an LSS $\Sigma$ as $H_{N,N}= O_N R_N$. Howevever, this is only a side remark since the methods given in this paper will not use explicit representations of the Hankel matrices.

\begin{Theorem} \label{theo:mert1}
	\emph{(One sided moment matching for $N$-partial realizations (reachability)).}
	Let
	\[
	\Sigma=(p,m,n,Q,\{(A_q,B_q,C_q)|q \in Q\},x_0)
	\]
	be an LSS realization of the input-output map $f$, $V \in \mathbb{R}^{n \times r}$ be a full column rank matrix such that
	\[
	\mathscr{R}_{N}(\Sigma) = \IM (V).
	\]
	If $\bar{\Sigma}=(p,m,r,Q,\{(\bar{A}_q,\bar{B}_q,\bar{C}_q)|q \in Q\},\bar{x}_0)$ is an LSS such that for each $q \in Q$, the matrices $\bar{A}_q,\bar{B}_q,\bar{C}_q$ and the vector $\bar{x}_0$ are defined as
	\[
	\bar{A}_q=V^{-1}A_qV \mbox{, } \bar{B}_q=V^{-1}B_q \mbox{, } \bar{C}_q=C_qV \mbox{, } \bar{x}_0=V^{-1}x_0,
	\]
	where $V^{-1}$ is a left inverse of $V$, 
	then $\bar{\Sigma}$ is an $N$-partial realization of $f$.
\end{Theorem}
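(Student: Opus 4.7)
The plan is to show directly that $\widetilde{\bar{C}}\,\bar{A}_v\,\widetilde{\bar{B}} = \widetilde{C}\,A_v\,\widetilde{B}$ for every $v \in Q^{\le N}$, which by Definition~\ref{def:Npartial} is exactly what it means for $\bar{\Sigma}$ to be an $N$-partial realization of $f$. Observe first that the construction of $\bar{\Sigma}$ assembles the block matrices $\widetilde{\bar{C}} = \widetilde{C} V$ and $\widetilde{\bar{B}} = V^{-1}\widetilde{B}$ (the first column of the latter being $\bar{x}_0 = V^{-1}x_0$). So the problem reduces to establishing the intertwining identity
\[
A_v\,\widetilde{B} \;=\; V\,\bar{A}_v\,\widetilde{\bar{B}} \qquad \text{for all } v \in Q^{\le N},
\]
after which left-multiplication by $\widetilde{C}$ finishes the proof.

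The key fact I would exploit is that, although $V^{-1}$ is only a \emph{left} inverse so $VV^{-1}$ is merely a projector onto $\IM(V) = \mathscr{R}_N(\Sigma)$, this projector acts as the identity on every column of $A_v\widetilde{B}$ whenever $|v| \le N$, since by Definition~\ref{ReachabilityMat} we have $\IM(A_v\widetilde{B}) \subseteq \mathscr{R}_N(\Sigma) = \IM(V)$. I would prove the intertwining identity by induction on $|v|$. For $|v| = 0$ the columns of $\widetilde{B}$ lie in $\mathscr{R}_0(\Sigma) \subseteq \mathscr{R}_N(\Sigma) = \IM(V)$, so $\widetilde{B} = VV^{-1}\widetilde{B} = V\widetilde{\bar{B}} = V\bar{A}_\varepsilon\widetilde{\bar{B}}$. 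For the inductive step, write $v = wq$ with $q \in Q$ and $|w| = |v|-1$, so that by Notation~\ref{Notation2} one has $A_v = A_q A_w$ and $\bar{A}_v = \bar{A}_q \bar{A}_w$. Using the hypothesis on $w$ and then the projector trick on the longer word $v$,
\[
A_v\widetilde{B} \;=\; A_q A_w \widetilde{B} \;=\; A_q V \bar{A}_w \widetilde{\bar{B}} \;=\; VV^{-1} A_q V \bar{A}_w \widetilde{\bar{B}} \;=\; V\bar{A}_q \bar{A}_w \widetilde{\bar{B}} \;=\; V\bar{A}_v \widetilde{\bar{B}},
\]
where the third equality is justified because $A_q V \bar{A}_w \widetilde{\bar{B}} = A_v \widetilde{B}$ already lies columnwise in $\IM(V)$ (which is where $|v| \le N$ enters), and the fourth equality uses $\bar{A}_q = V^{-1}A_q V$ by construction.

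With the intertwining identity in place, the Markov-parameter equality drops out immediately: $\widetilde{\bar{C}} \bar{A}_v \widetilde{\bar{B}} = \widetilde{C} V \bar{A}_v \widetilde{\bar{B}} = \widetilde{C} A_v \widetilde{B} = M^f(v)$ for every $v \in Q^{\le N}$. The main obstacle to watch for is the careful bookkeeping caused by $V^{-1}$ being only a left inverse, together with the reverse-order convention for $A_w$ in Notation~\ref{Notation2}: one must make sure that at each inductive step the projector $VV^{-1}$ is applied only to a matrix whose image is contained in $\mathscr{R}_N(\Sigma)$, which is precisely why the argument cannot be pushed beyond words of length $N$ and why the hypothesis $\IM(V) = \mathscr{R}_N(\Sigma)$ is used in full force.
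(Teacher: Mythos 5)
Your proof is correct and follows essentially the same route as the paper's: an induction on the length of the word, using that $VV^{-1}$ acts as the identity on $\IM(V)=\mathscr{R}_N(\Sigma)$, which contains the columns of $A_v\widetilde{B}$ for all $|v|\le N$. The only cosmetic difference is that you carry the block matrices $\widetilde{B}$, $\widetilde{C}$ through the induction, whereas the paper argues columnwise with the individual $B_{q_0}$ and $x_0$; the substance is identical.
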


\begin{proof}
	Let $w=q_1\cdots q_k$, $k=0,\ldots,N$, $q_1,\ldots,q_k \in Q$ and let $q_0 \in Q$. If $k=0$, then $w=\epsilon$. Since the conditions of Theorem~\ref{theo:mert1} imply $\IM (B_{q_0}) \subseteq \IM (V)$ and $V^{-1}$ is a left inverse of $V$, it is a routine exercise to see that $VV^{-1}B_{q_0}=B_{q_0}$.If $k > 0$, then $\IM (A_{q_i} \cdots A_{q_1}B_{q_0})$ is also a subset of $\mathscr{R}_N=\IM (V)$, $i=1,\ldots,k$. Hence, by induction we can show that 
	$VV^{-1}A_{q_i} \cdots A_{q_1}B_{q_0}=A_{q_i} \cdots A_{q_1}B_{q_0}$, $i=1,\ldots,k$, which ultimately yields
	\begin{equation} \label{eq2:theo_mert1}
		V\bar{A}_{q_{k}}\cdots \bar{A}_{q_1}\bar{B}_{q_0}=V\bar{A}_w\bar{B}_{q_0}=A_wB_{q_0}.
	\end{equation}
	Using a similar argument, we can show that
	\begin{equation} \label{eq21:theo_mert1}
		V\bar{A}_w\bar{x}_0=A_wx_0.
	\end{equation}
	Using \eqref{eq2:theo_mert1} and \eqref{eq21:theo_mert1}, and 
	$\bar{C}_q=C_qV$, $q \in Q$, we conclude that for all $w \in Q^{*}$, $|w| \le N$,
	$q,q_0 \in Q$, 
	\[ 
	\bar{C}_q\bar{A}_w\bar{B}_{q_0}=C_qA_wB_{q_0}
	\mbox{ and }
	\bar{C}_q\bar{A}_w\bar{x}_0 = C_qA_wx_0,
	\]
	from which the statement of the theorem follows.
\end{proof}

Note that the number $r$ is the number of columns in the full column rank matrix $V$, hence $r \leq n$. This fact leads $\bar{\Sigma}$ to be of reduced order if $N$ is sufficiently small, see Corollary \ref{cor:rel_N_n}. Using a dual argument, we can prove the following dual result.

\begin{Theorem}[One sided moment matching (observability)] \label{theo:mert2}
	Let $\Sigma=(p,m,n,Q,\{(A_q,B_q,C_q)|q \in Q\},x_0)$ be an LSS realization of the input-output map $f$, $W \in \mathbb{R}^{r \times n}$ be a full row rank matrix such that
	\[
	\mathscr{O}_{N}(\Sigma) = \ker (W)
	\]
	Let $W^{-1}$ be any right inverse of $W$ and let
	\[
	\bar{\Sigma}=(p,m,r,Q,\{(\bar{A}_q,\bar{B}_q,\bar{C}_q)|q \in Q\},\bar{x}_0)
	\]
	be an LSS such that for each $q \in Q$, the matrices $\bar{A}_q,\bar{B}_q,\bar{C}_q$ and the vector $\bar{x}_0$ are defined as
	\[
	\bar{A}_q=WA_qW^{-1} \mbox{, } \bar{B}_q=WB_q \mbox{, } \bar{C}_q=C_qW^{-1} \mbox{, } \bar{x}_0=Wx_0.
	\]
	Then $\bar{\Sigma}$ is an $N$-partial realization of $f$.
\end{Theorem}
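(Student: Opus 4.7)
The plan is to dualize the proof of Theorem \ref{theo:mert1}. That proof used $\IM(A_v\widetilde{B}) \subseteq \IM(V)$ to insert $VV^{-1}$ on the right of $\widetilde{B}$ harmlessly; here I will use the dual inclusion $\ker W = \mathscr{O}_N \subseteq \ker(\widetilde{C}A_v)$ for $|v|\le N$ to insert $W^{-1}W$ on the left of $\widetilde{C}$ harmlessly. Concretely, I aim to establish the key identity
\[
\bar{C}_q \bar{A}_w\, W \;=\; C_q A_w, \qquad \forall q \in Q,\ w \in Q^{\leq N},
\]
because right-multiplying by $B_{q_0}$ (respectively by $x_0$) and using $\bar{B}_{q_0}=WB_{q_0}$ and $\bar{x}_0=Wx_0$ then immediately yields $\bar{C}_q \bar{A}_w\bar{B}_{q_0}=C_qA_wB_{q_0}$ and $\bar{C}_q \bar{A}_w\bar{x}_0=C_qA_wx_0$ for all $|w|\le N$, which by \eqref{eq:markov1} is exactly what the definition of an $N$-partial realization demands.

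To set up the identity, I would first introduce $P:=W^{-1}W$ and record that $\IM(P-I)\subseteq \ker W=\mathscr{O}_N$. Because $\mathscr{O}_N\subseteq \mathscr{O}_{|w'|}\subseteq \ker(\widetilde{C}A_{w'})\subseteq \ker(C_qA_{w'})$ for every $q\in Q$ and every $|w'|\le N$, this immediately yields the pointwise operator identity
\[
C_q A_{w'}\, P \;=\; C_q A_{w'} \qquad \text{whenever } |w'|\le N,\ q\in Q.
\]
This is the exact dual of the identity $VV^{-1}A_{w}B_{q_0}=A_{w}B_{q_0}$ exploited in the proof of Theorem \ref{theo:mert1}.

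Next I would expand $\bar{A}_w$ via $\bar{A}_{q_i}=WA_{q_i}W^{-1}$, so that between every two consecutive factors a $W^{-1}W=P$ appears, and then multiply on the left by $\bar{C}_q=C_qW^{-1}$ and on the right by $W$. The outer $W^{-1}$ and $W$ combine with the surrounding $W,W^{-1}$ into two additional $P$'s, giving, for $w=q_1q_2\cdots q_k$,
\[
\bar{C}_q \bar{A}_w\, W \;=\; C_q\, P\, A_{q_k}\, P\, A_{q_{k-1}}\, P\, \cdots\, P\, A_{q_1}\, P.
\]
I then peel the $P$'s off one at a time, from left to right: the $j$th peel applies the preparatory operator identity to the suffix of $w$ of length $j-1$, and since every such suffix has length at most $k=|w|\le N$, the identity applies at every step. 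After $k+1$ peels, what remains is $C_q A_{q_k}\cdots A_{q_1}=C_qA_w$, which is the desired key identity.

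The only real delicacy is the bookkeeping of which subword $w'$ is needed at each peel, and the verification that $|w'|\le N$ throughout; this is automatic because every suffix of $w$ has length at most $|w|\le N$, so the hypothesis $\mathscr{O}_N=\ker W$ is used at the correct depth at each step and nowhere beyond it. No further obstacle is expected: the overall structure mirrors that of Theorem \ref{theo:mert1} exactly, with the roles of image and kernel, and of right inverse and left inverse, systematically interchanged.
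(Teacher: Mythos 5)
Your proposal is correct and matches the paper's intended argument: the paper proves Theorem \ref{theo:mert2} simply ``by duality'' from Theorem \ref{theo:mert1}, and the explicit dual argument it spells out later (for the nice-selection generalization, Theorem \ref{theo:krylov2}) establishes exactly your key identity $\bar{C}_q\bar{A}_wW=C_qA_w$ by induction on $|w|$, which is what your telescoping of the inserted factors $P=W^{-1}W$ amounts to. The one delicacy you flag (which subwords are needed at each peel) is indeed harmless here, since $\ker W=\mathscr{O}_N$ annihilates $C_qA_{w'}$ for \emph{every} $w'$ of length at most $N$, not merely the suffixes of $w$.
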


Finally, by combining the proofs of Theorem \ref{theo:mert1} and Theorem \ref{theo:mert2}, we can show the following.

\begin{Theorem}[Two sided moment matching] \label{theo:mert3}
	Let $\Sigma=(p,m,n,Q,\{(A_q,B_q,C_q)|q \in Q\},x_0)$ be an LSS realization of the input-output map $f$, $V \in \mathbb{R}^{n \times r}$ and $W \in \mathbb{R}^{r \times n}$ be respectively full column rank and full row rank matrices such that
	\[
	\mathscr{R}_{N}(\Sigma) = \IM (V) \mbox{, } \mathscr{O}_{N}(\Sigma) = \ker (W) \mbox{ and } \Rank(WV)=r.
	\]
	If $\bar{\Sigma}=(p,m,r,Q,\{(\bar{A}_q,\bar{B}_q,\bar{C}_q)|q \in Q\},\bar{x}_0)$ is an LSS such that for each $q \in Q$, the matrices $\bar{A}_q,\bar{B}_q,\bar{C}_q$ and the vector $\bar{x}_0$ are defined as
	\[
	\bar{A}_q=WA_qV(WV)^{-1} \mbox{, } \bar{B}_q=WB_q \mbox{, } \bar{C}_q=C_qV(WV)^{-1} \mbox{, } \bar{x}_0=Wx_0,
	\]
	then $\bar{\Sigma}$ is a $2N$-partial realization of $f$.
\end{Theorem}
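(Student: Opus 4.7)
The plan is to reduce the computation to a clean calculation involving the oblique projection $P = V(WV)^{-1}W$. Since $WV$ is invertible by assumption, setting $V' := V(WV)^{-1}$ gives $WV' = I_r$, so $P = V'W$ is an idempotent endomorphism of $\mathbb{R}^n$ with image $\IM(V) = \mathscr{R}_N(\Sigma)$ and kernel $\ker(W) = \mathscr{O}_N(\Sigma)$. In this notation, the reduced system has $\bar{A}_q = W A_q V'$, $\bar{B}_q = W B_q$, $\bar{C}_q = C_q V'$, and $\bar{x}_0 = W x_0$.

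First I would establish two ``absorption'' identities. For any $y \in \mathscr{R}_N$ one has $Py = y$; in particular $P A_u B_{q_0} = A_u B_{q_0}$ and $P A_u x_0 = A_u x_0$ whenever $|u| \le N$, because $A_u B_{q_0}$ and $A_u x_0$ lie in $\mathscr{R}_{|u|} \subseteq \mathscr{R}_N$ (using $x_0, B_{q_0} \in \IM(\widetilde{B}) = \mathscr{R}_0$). Dually, for any $x \in \mathbb{R}^n$ the vector $x - Px$ lies in $\ker(W) = \mathscr{O}_N$, so by the definition of $\mathscr{O}_N$,
\[
C_q A_v(x - Px) = 0 \quad \text{for every } q \in Q \text{ and every } v \in Q^{\le N};
\]
equivalently, $C_q A_v P = C_q A_v$ on $\mathbb{R}^n$ whenever $|v| \le N$.

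Next, I would prove by induction on $|u| = k \le N$ the identities $\bar{A}_u \bar{B}_{q_0} = W A_u B_{q_0}$ and $\bar{A}_u \bar{x}_0 = W A_u x_0$, expanding $\bar{A}_u \bar{B}_{q_0} = (WA_{q_k}V')(WA_{q_{k-1}}V')\cdots(WA_{q_1}V')(WB_{q_0})$ and collapsing the interior $V'W$ factors from the innermost outward via the reachability-side absorption applied to $A_{q_i}\cdots A_{q_1} B_{q_0} \in \mathscr{R}_N$; the argument for $\bar{x}_0$ is identical. Dually, I would show by induction on $|v| \le N$ that $\bar{C}_q \bar{A}_v = C_q A_v V'$, this time collapsing $V'W$ factors from the leftmost outward using $C_q A_{v_l\cdots v_{i+1}} V'W = C_q A_{v_l\cdots v_{i+1}}$, which is legitimate as long as the accumulated word length stays at most $N$.

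Finally, given any $w \in Q^*$ with $|w| \le 2N$, I would split $w = uv$ so that $|u|, |v| \le N$; since $A_{uv} = A_v A_u$ and $\bar{A}_{uv} = \bar{A}_v \bar{A}_u$, the two halves combine as
\[
\bar{C}_q \bar{A}_w \bar{B}_{q_0} = (\bar{C}_q \bar{A}_v)(\bar{A}_u \bar{B}_{q_0}) = (C_q A_v V')(W A_u B_{q_0}) = C_q A_v P A_u B_{q_0},
\]
and one last application of $P A_u B_{q_0} = A_u B_{q_0}$ yields $C_q A_w B_{q_0}$; the same splitting gives $\bar{C}_q \bar{A}_w \bar{x}_0 = C_q A_w x_0$. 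The main obstacle is the bookkeeping in the two inductions: one has to verify that every intermediate $V'W$ factor can legitimately be absorbed, which relies on the crucial fact that on the reachability side the word length on the right of $P$ stays within $N$, and on the observability side the word length on the left of $P$ stays within $N$. The two bounds add up exactly to $2N$, which is why $2N$-partial realization (rather than only $N$-partial) is the sharp conclusion.
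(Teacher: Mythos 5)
Your proof is correct and follows essentially the same route as the paper: the paper proves Theorem \ref{theo:mert3} by combining the one-sided arguments of Theorems \ref{theo:mert1} and \ref{theo:mert2} and splicing them together via the identity $W\cdot V(WV)^{-1}=I_r$, exactly as in your final step (this is spelled out in detail for the more general Theorem \ref{theo:krylov3}). Your explicit use of the oblique projection $P=V(WV)^{-1}W$ and the two absorption identities is just a slightly more systematic packaging of the same inductions.
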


Note that having a $2N$-partial realization as an approximation system would be more desirable than having an $N$-partial realization, since number of matched Markov parameters would increase. However, it is only possible to get a $2N$-partial realization for the original system $\Sigma$ when the additional condition $\Rank(V)=\Rank(W)=\Rank(WV)=r$ is satisfied. Now, we will present an efficient algorithm of model reduction by moment matching, which computes either an $N$ or $2N$-partial realization $\bar{\Sigma}$ for an $f$ which is realized by an LSS $\Sigma$. First, we present algorithms for computing the subspaces $\mathscr{R}_N$ and $\mathscr{O}_N$. To this end, we will use the following notation: if $M$ is any real matrix, then denote by $\mathbf{orth}(M)$ the matrix $U$ such that $U$ is full column rank, $\IM (U)=\IM (M)$ and $U^{\mathrm{T}}U=I$. Note that $U$ can easily be computed from $M$ numerically, see for example the Matlab command \texttt{orth}.

The algorithm for computing $V  \in \mathbb{R}^{n \times r}$ such that $\IM(V)=\mathscr{R}_N$ is presented in Algorithm \ref{alg1} below.

\begin{algorithm}[h]
	\caption{
		Calculate  a matrix representation of $\mathscr{R}_N$,
		\newline
		\textbf{Inputs}: $(\{A_q,B_q\}_{q \in Q},x_0)$ and $N$
		\newline
		\textbf{Outputs:} $V  \in \mathbb{R}^{n \times r}$ such that $\Rank (V)=r$,
		$\IM (V) = \mathscr{R}_N$.
	}
	\label{alg1}
	\begin{algorithmic}
		\STATE $V:=U_0$, $U_0:=\mathbf{orth}\begin{bmatrix} x_0,& B_1, & \ldots, & B_{\QNUM} \end{bmatrix}$.
		\FOR{$k=1\ldots N$} 
		\STATE
		$V:=\mathbf{orth}(\begin{bmatrix} V, & A_1V, & A_2V, & \ldots, & A_{\QNUM}V \end{bmatrix})$
		\ENDFOR
		\RETURN $V$.
	\end{algorithmic}
\end{algorithm}

By duality, we can use Algorithm \ref{alg1} to compute a $W \in \mathbb{R}^{r \times n}$ such that $\ker(W)=\mathscr{O}_N$, the details are presented in Algorithm \ref{alg2}.

\begin{algorithm}
	\caption{
		Calculate a matrix representation of $\mathscr{O}_N$
		\newline
		\textbf{Inputs}: $\{A_q,C_q\}_{q \in Q}$ and $N$
		\newline
		\textbf{Output:} $W \in \mathbb{R}^{r \times n}$, such that
		$\Rank (W) = r$ and $\ker (W)=\mathscr{O}_N$.
	}
	\label{alg2}
	\begin{algorithmic}
		\STATE Apply Algorithm \ref{alg1} with inputs $(\{A_q^{\mathrm{T}},C_q^{\mathrm{T}}\}_{q \in Q},0)$ to obtain
		a matrix $V$.
		\RETURN $W=V^{\mathrm{T}}$.
	\end{algorithmic}
\end{algorithm} 

Notice that the computational complexity of Algorithm \ref{alg1} and Algorithm \ref{alg2} is polynomial in $N$ and $n$, even though the spaces of $\mathscr{R}_N$ (resp. $ \mathscr{O}_N$) are generated by images (resp. kernels) of exponentially many matrices.

Using Algorithms \ref{alg1} and \ref{alg2}, we can formulate a model reduction algorithm, see Algorithm \ref{alg3}.
\begin{algorithm}
	\caption{Moment matching for LSSs
		\newpage 
		\textbf{Inputs:} $\Sigma=(p,m,n,Q,\{(A_q,B_q,C_q)|q \in Q\},x_0)$, $\mathrm{Mode} \in \{ R,O,T \}$ and $N \in \mathbb{N}$.
		\newpage  
		\textbf{Output: } $\bar{\Sigma}=(p,m,r,Q,\{(\bar{A}_q,\bar{B}_q,\bar{C}_q)|q \in Q\},\bar{x}_0)$.
	}
	\label{alg3}
	\begin{algorithmic}
		\STATE Using Algorithm \ref{alg1}-\ref{alg2} compute matrices $V$ and $W$ such that
		$V$ is full column rank, $W$ is full row rank and $\IM (V)= \mathscr{R}_N$,
		$\ker (W) = \mathscr{O}_N$.
		\IF{$\Rank (V)=\Rank (W)=\Rank (WV)$ and $\mathrm{Mode}=T$} 
		\STATE
		Let $r=\Rank (V)$ and
		\begin{align*}
			& \bar{A}_q=WA_qV(WP)^{-1} \mbox{, } \bar{C}_q=C_qV(WV)^{-1} \mbox{, } \\
			& \bar{B}_q=WB_q \mbox{, } \bar{x}_0=Wx_0. 
		\end{align*}
		\ENDIF
		\IF{$\mathrm{Mode}=R$}
		\STATE
		Let $r=\Rank (V)$, $V^{-1}$ be a left inverse of $V$ and set
		\[
		\bar{A}_q=V^{-1}A_qV \mbox{, } \bar{C}_q=C_qV \mbox{, } \bar{B}_q=V^{-1}B_q \mbox{, } \bar{x}_0=V^{-1}x_0.
		\]
		\ENDIF
		\IF{$\mathrm{Mode}=O$}
		\STATE
		Let $r=\Rank (W)$ and let $W^{-1}$ be a right inverse of $W$. Set
		\[
		\bar{A}_q=WA_qW^{-1} \mbox{, } \bar{C}_q=C_qW^{-1} \mbox{, } \bar{B}_q=WB_q \mbox{, } \bar{x}_0=Wx_0.
		\]
		\ENDIF
		\RETURN $\bar{\Sigma}=(p,m,r,Q,\{(\bar{A}_q,\bar{B}_q,\bar{C}_q)|q \in Q\},\bar{x}_0)$.
	\end{algorithmic}
\end{algorithm}

Theorems \ref{theo:mert1} -- \ref{theo:mert3} imply the following corollary on correctness of Algorithm \ref{alg3}.

\begin{Corollary}[Correctness of Algorithm \ref{alg3}]
	Using the notation of Algorithm \ref{alg3}, the following holds: If $\Rank (V)=\Rank (W) = \Rank (WV)$ and $\mathrm{Mode}=T$, then Algorithm \ref{alg3} returns a $2N$-partial realization of $f=Y_{\Sigma,x_0}$ (if $\mathrm{Mode}=T$ and the rank condition does not hold, the algorithm returns nothing). Otherwise, Algorithm \ref{alg3} returns an $N$-partial realization of $f=Y_{\Sigma,x_0}$.
\end{Corollary}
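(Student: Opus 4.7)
The plan is to reduce the correctness of Algorithm \ref{alg3} to Theorems \ref{theo:mert1}, \ref{theo:mert2}, and \ref{theo:mert3} by verifying (i) that Algorithms \ref{alg1} and \ref{alg2} produce matrices $V$ and $W$ with $\IM(V)=\mathscr{R}_N$ and $\ker(W)=\mathscr{O}_N$, and (ii) that the four matrices $\bar A_q,\bar B_q,\bar C_q,\bar x_0$ defined in each \textbf{if}-branch of Algorithm \ref{alg3} coincide with the matrices appearing in the corresponding theorem. Once these two points are established, each branch of Algorithm \ref{alg3} is an instance of one of the three theorems, and the claimed partial-realization property follows immediately.

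First I would prove correctness of Algorithm \ref{alg1} by induction on the loop counter $k$. The initialization $V:=\mathbf{orth}[x_0,B_1,\dots,B_D]$ gives $\IM(V)=\IM(\widetilde B)=\mathscr{R}_0$, which is the base case. For the inductive step, I would use the recursion $\mathscr{R}_{k}=\IM(\widetilde B)+\sum_{q\in Q}\IM(A_q\mathscr{R}_{k-1})$ noted just after Definition \ref{ReachabilityMat}, together with the elementary fact that $\IM(\mathbf{orth}[V,A_1V,\dots,A_DV])=\IM(V)+\sum_{q\in Q}\IM(A_qV)$. Since after iteration $k-1$ the inductive hypothesis gives $\IM(V)=\mathscr{R}_{k-1}$ and since $\mathscr{R}_{k-1}\supseteq \mathscr{R}_0\supseteq \IM(\widetilde B)$, the recursive identity forces $\IM(V)=\mathscr{R}_k$ after iteration $k$. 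Correctness of Algorithm \ref{alg2} then follows by applying Algorithm \ref{alg1} to the dual tuple $(\{A_q^{\mathrm T},C_q^{\mathrm T}\}_{q\in Q},0)$ and observing that $\mathscr{R}_N$ of this dual system is precisely the orthogonal complement of $\mathscr{O}_N(\Sigma)$, so that $\ker(W)=\mathscr{O}_N$ when $W=V^{\mathrm T}$.

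Next I would treat the three branches of Algorithm \ref{alg3} in turn. In the $\mathrm{Mode}=R$ branch, the definitions $\bar A_q=V^{-1}A_qV$, $\bar B_q=V^{-1}B_q$, $\bar C_q=C_qV$, $\bar x_0=V^{-1}x_0$ are literally those of Theorem \ref{theo:mert1}, and the hypothesis $\IM(V)=\mathscr{R}_N$ has just been established; hence $\bar\Sigma$ is an $N$-partial realization of $f$. The $\mathrm{Mode}=O$ branch is identical after substituting Theorem \ref{theo:mert2} and the right-inverse formulas. In the $\mathrm{Mode}=T$ branch the extra rank condition $\Rank(V)=\Rank(W)=\Rank(WV)=r$ is exactly the hypothesis of Theorem \ref{theo:mert3}, and the four matrices $\bar A_q=WA_qV(WV)^{-1}$, $\bar B_q=WB_q$, $\bar C_q=C_qV(WV)^{-1}$, $\bar x_0=Wx_0$ match those of Theorem \ref{theo:mert3}, so $\bar\Sigma$ is a $2N$-partial realization of $f$. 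If the rank condition fails while $\mathrm{Mode}=T$, then the algorithm exits without entering either of the subsequent branches and returns nothing, consistent with the corollary's statement.

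The only mildly delicate step is the inductive verification that Algorithm \ref{alg1} actually reproduces $\mathscr{R}_N$; numerically, $\mathbf{orth}$ is applied to a growing but always $n$-row matrix, so one must note that orthonormalizing does not change the column span and that the iterative identity for $\mathscr{R}_k$ uses all of $\IM(\widetilde B)$ at each stage, which is guaranteed because $\mathscr{R}_{k-1}\supseteq \IM(\widetilde B)$ by construction. Once this is checked, the corollary follows as a direct assembly of the three theorems and the dual correctness of Algorithm \ref{alg2}; no further computation is required.
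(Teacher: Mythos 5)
Your proposal is correct and follows exactly the route the paper intends: the paper offers no explicit proof, stating only that Theorems \ref{theo:mert1}--\ref{theo:mert3} imply the corollary, with the correctness of Algorithms \ref{alg1} and \ref{alg2} left implicit in the recursions $\mathscr{R}_0=\IM(\widetilde{B})$, $\mathscr{R}_{k}=\IM(\widetilde{B})+\sum_{q\in Q}\IM(A_q\mathscr{R}_{k-1})$ stated after Definitions \ref{ObservabilityMat} and \ref{ReachabilityMat}. Your inductive verification of those algorithms and the case-by-case matching of each branch to its theorem is precisely the routine assembly the authors had in mind.
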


Note that the input variable $\mathrm{Mode}$ in Algorithm \ref{alg3} represents the choice of the user on which method to be used, i.e., if $\mathrm{Mode}=R$, the algorithm uses Theorem \ref{theo:mert1}; if $\mathrm{Mode}=O$, the algorithm uses Theorem \ref{theo:mert2} and if $\mathrm{Mode}=T$, the algorithm uses Theorem \ref{theo:mert3}. If $\mathrm{Mode}=T$ and the condition $\Rank (V)=\Rank (W)=\Rank (WV)$ does not hold, Algorithm \ref{alg3} can always be used for getting an $N$-partial realization, by choosing $\mathrm{Mode}=O$ or  $\mathrm{Mode}=R$.

\section{Model reduction by nice selections}
\label{sect:nice}

In this section, a more general approach for moment matching of LSSs will be taken. In contrast to the $N$-partial realization solution, the material in this section is not direct analogue of the moment matching for linear systems, it is more suited for LSSs specifically. The notion of nice selections of columns (resp. rows) of the reachability (resp. observability) matrices of an LSS, gives flexibility to the user of the method in this section in the following sense. The user may focus on the approximation of some specific modes more than the others. Moreover, as we will show in Theorem \ref{thm:nice_sel_switch_seq}, the method can be used for exactly matching (or approximating) the input-output behavior of a continuous time LSS with an LSS of possibly lower order for a \emph{certain} switching sequence.

Now, the concept of nice column (resp. row) selections for (partial) reachability (resp. observability) space of an LSS will be defined. This is the central tool for the moment matching method to be presented. 

\begin{Definition}[Nice selections]
	A subset $\alpha$ of $Q^* \times Q \times I$, $I=\{1, \dots, p\}$ is called a nice row selection for an LSS $\Sigma$, if $\alpha$ has the following property; if $(\sigma v, q, i) \in \alpha$ for some $\sigma \in Q$, $v \in Q^*$, then $(v,q,i) \in \alpha$.
	
	Likewise, a subset $\beta$ of $(Q^* \times Q \times J) \cup Q^*$, $J=\{1, \dots, m\}$, is called a nice column selection for an LSS  $\Sigma$, if $\beta$ has the following property; if $(w \sigma, q, j) \in \beta$ for some $\sigma \in Q$, $w \in Q^*$, then $(w,q,j) \in \beta$; and if $w \sigma \in \beta$ for some $\sigma \in Q$, $w \in Q^*$, then $w \in \beta$.
\end{Definition}

The spaces related to a row nice selection $\alpha$ or a column nice selection $\beta$ can now be defined.
 
\begin{Definition}[$\alpha$-unobservability and $\beta$-reachability spaces]
Let  $\Sigma$ be a minimal realization of $Y_{\Sigma,x_0}$. Let $\alpha$ be a nice row selection and $\beta$ be a nice column selection related to $\Sigma$. Then the subspaces
\begin{equation*}
\begin{aligned}
& \mathscr{O}_\alpha(\Sigma)= \bigcap_{(v,q,i) \in \alpha} \ker (e_i^\mathrm{T}C_qA_v) \\
& \mathscr{R}_\beta(\Sigma) = \SPAN \{ \{ A_wB_qe_j \mid (w,q,j) \in \beta \} \cup \{ A_wx_0 \mid w \in \beta \} \}
\end{aligned}
\end{equation*}
will be called $\alpha$-unobservability and $\beta$-reachability spaces of $\Sigma$ respectively.
\end{Definition}
Similarly to the previous section, $\mathscr{O}_\alpha(\Sigma)$ and $\mathscr{R}_\beta(\Sigma)$ will be denoted by $\mathscr{O}_\alpha$ and $\mathscr{R}_\beta$ if $\Sigma$ is clear from the context. 

\begin{Example} \label{ex:nice_sel1}
	In order to illustrate the notion of a nice selection, let us consider the linear SISO case. Then $p=m=D=1$, and hence $J=I$ and $v \in  Q^{*}$ can be identified with its length, since $v=\overbrace{11\cdots 1}^{|v|-times}$. It then follows that an element $(v,q,i)$ of a nice selection is of the form $i=q=1$ and $v=\overbrace{11\cdots 1}^{|v|-times}$ and hence it can be identified with the natural number $|v| \in \mathbb{N}$. Then a nice selection $\alpha$ can be identified with a subset $\widetilde{\alpha} \subseteq \mathbb{N}$ with the property that if $0 < k \in \widetilde{\alpha}$, then $k-1 \in \widetilde{\alpha}$. For the MIMO linear case, $D=1$, and any sequence $v \in Q^{*}$ can be identified with its length as explained above. Then a nice column selection $\beta$ is a subset of $(\mathbb{N} \times \{1,\ldots,m\}) \cup \mathbb{N}$, such that if $(k,j) \in \beta, k > 0$, then $(k-1,j) \in \beta$ and if $k \in \beta$ then $k-1 \in \beta$. A similar characterization holds for nice row selections. That is, for the linear case, our definition of nice selections yields the classical concept \cite{Hazewinkel1}.
\end{Example}

The moment matching method for LSSs to be presented is based on constructing matrix representations of the $\beta$-reachability or $\alpha$-unobservability spaces of an LSS $\Sigma$, i.e., again constructing the matrices $V$ or $W$ such that $\IM (V)= \mathscr{R}_{\beta}$ and $\ker (W)= \mathscr{O}_{\alpha}$. For this purpose, it is crucial to find a basis for those spaces. The following lemma connects the notion of nice selections to this goal.

\begin{Theorem} \label{lem:nice_selections}
	Let $\Sigma$ be an LSS of the form 
	\[\Sigma=(p,m,n,Q,\{(A_q,B_q,C_q)|q \in Q\},x_0).\]
	For any $r<n$, there exists a nice column selection $\beta \subseteq (Q^* \times Q \times J) \cup Q^*$ (resp. row selection $\alpha \subseteq Q^* \times Q \times I$) such that $\dim (\mathscr{R}_\beta)=r$ (resp. $ \dim (\mathscr{O}_\alpha)=n-r$). 
\end{Theorem}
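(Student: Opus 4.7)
The plan is to prove the column-selection claim by induction on $r$, and then obtain the row-selection claim by a completely analogous argument applied to the row functionals $e_i^{\mathrm{T}} C_q A_v$. The base case $r = 0$ is immediate: take $\beta = \emptyset$, so $\mathscr{R}_\beta = \SPAN(\emptyset) = \{0\}$ and niceness holds vacuously.

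For the inductive step, I would assume $\beta$ is nice with $\dim \mathscr{R}_\beta = r < n$, and construct $\beta'$ with $\dim \mathscr{R}_{\beta'} = r+1$. Since $\Sigma$ is tacitly minimal, it is span-reachable, hence $\SPAN\{A_w x_0, A_w B_q e_j : w \in Q^*, q \in Q, j \in J\} = \mathbb{R}^n$, which properly contains $\mathscr{R}_\beta$. Therefore the set $\mathcal{E} = \{(w,q,j) : A_w B_q e_j \notin \mathscr{R}_\beta\} \cup \{w : A_w x_0 \notin \mathscr{R}_\beta\}$ is non-empty. I would then choose any $e^* \in \mathcal{E}$ whose word component $w^* = q_1 \cdots q_k$ has minimum length, and define the prefix closure $\mathcal{P}(e^*)$ as follows: if $e^* = (w^*, q^*, j^*)$, set $\mathcal{P}(e^*) = \{(q_1\cdots q_i, q^*, j^*) : 0 \le i \le k\}$; otherwise set $\mathcal{P}(e^*) = \{q_1\cdots q_i : 0 \le i \le k\}$. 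Finally, put $\beta' = \beta \cup \mathcal{P}(e^*)$.

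Two verifications finish the step. Niceness of $\beta'$ is clear because every element of $\mathcal{P}(e^*)$ has its immediate predecessor either also in $\mathcal{P}(e^*)$ or (only when the word is $\epsilon$) subject to the vacuous prefix condition, while the nice structure of $\beta$ is untouched. For the dimension, the minimality of $|w^*|$ is the key: every element of $\mathcal{P}(e^*) \setminus \{e^*\}$ has a strictly shorter word, so its associated vector already lies in $\mathscr{R}_\beta$; hence $\mathscr{R}_{\beta'} = \mathscr{R}_\beta + \SPAN\{v^*\}$, where $v^* \notin \mathscr{R}_\beta$ is the vector attached to $e^*$, and the dimension increases by exactly one.

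For the row case I would replay the same greedy procedure on the dual: observability of $\Sigma$ yields $\SPAN\{e_i^{\mathrm{T}} C_q A_v : v \in Q^*, q \in Q, i \in I\} = (\mathbb{R}^n)^*$ together with $\dim \mathscr{O}_\alpha = n - \dim \SPAN\{e_i^{\mathrm{T}} C_q A_v : (v,q,i) \in \alpha\}$; the relevant prefix operation now removes the \emph{leftmost} letter, consistent with the row-niceness condition $(\sigma v, q, i) \in \alpha \Rightarrow (v, q, i) \in \alpha$ and with the identity $A_{\sigma v} = A_v A_\sigma$. The main obstacle throughout is the niceness bookkeeping: a naive greedy step that just adjoins $e^*$ would typically violate prefix-closure, and the reason we can repair this by adjoining all prefixes without losing exact control of the dimension is precisely the minimality of $|w^*|$, which makes the forced prefix vectors ``free'' in $\mathscr{R}_\beta$ (respectively in the current span of row functionals).
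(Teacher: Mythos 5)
Your proof is correct, and at bottom it exploits the same mechanism as the paper's: a greedy, length-first selection of generators, with span-reachability (resp.\ observability) of the tacitly minimal $\Sigma$ guaranteeing the process does not stall before dimension $r$ is reached. The two arguments are organized differently, however. The paper introduces an explicit total order $\prec$ on $Q^*$ (a base-$(D+1)$ encoding refining word length), lists all columns $A_v\widetilde{B}$ for $|v|\le n-1$ in that order, greedily takes the first $r$ columns independent of their predecessors, and then proves \emph{a posteriori}, by contradiction, that the resulting index set is prefix-closed: if $(w\sigma,q,j)$ were selected while $(w,q,j)$ was not, writing $A_wB_qe_j$ as a combination of earlier columns and left-multiplying by $A_\sigma$ (using that $\prec$ is compatible with right-concatenation of letters) would exhibit $A_{w\sigma}B_qe_j$ as a combination of its own predecessors, contradicting its selection. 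You instead run an induction on $r$ and build niceness in by construction, adjoining the entire prefix closure of one new index per step; the price is the dimension count, which you settle by taking the new index of minimal word length so that all forced prefixes are already ``free'' in $\mathscr{R}_\beta$. Both routes are sound. Yours dispenses with the explicit ordering and the contradiction argument and makes the increment $\dim\mathscr{R}_{\beta'}=\dim\mathscr{R}_\beta+1$ completely transparent, while the paper's single-scan version more directly mirrors the classical nice-selection construction for linear systems and is the form from which the paper later reads off the bound $\beta\subseteq (Q^{\le k-1}\times Q\times J)\cup Q^{\le k-1}$ for a selection with $k$ elements (a bound your construction also yields, since minimal-length picks exhaust each word length before moving to the next, though you do not state it). Your treatment of the row case --- removing the leftmost letter, consistent with $A_{\sigma v}=A_vA_\sigma$ and with $\dim\mathscr{O}_\alpha=n-\dim\widetilde{\mathscr{O}}_\alpha$ --- is likewise correct and matches the duality the paper invokes without spelling out.
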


\begin{proof}
	See Appendix.
\end{proof}

Let $\widetilde{\mathscr{O}}_N$ denote the space $\widetilde{\mathscr{O}}_N= \SPAN \{ \IM ((\widetilde{C}A_v)^\mathrm{T}) \mid v \in Q^{\leq N} \}$ and $\widetilde{\mathscr{O}}_\alpha$ denote the space $\widetilde{\mathscr{O}}_\alpha= \SPAN \{ (e_i^\mathrm{T}C_qA_v)^\mathrm{T} \mid (v,q,i) \in \alpha \}$ for a nice row selection $\alpha$. Note that $\widetilde{\mathscr{O}}_\alpha$ is isomorphic to the orthogonal complement of $\mathscr{O}_{\alpha}$ and $\widetilde{\mathscr{O}}_N$ is isomorphic to the orthogonal complement of $\mathscr{O}_{N}$. From the proof of Theorem~\ref{lem:nice_selections}, it can be seen that there exists a nice column selection $\beta \subseteq (Q^{\le N} \times Q \times J) \cup Q^*$, and a nice row selection of $\alpha \subseteq Q^{\le N} \times Q \times I$, such that $\dim (\widetilde{\mathscr{O}}_{\alpha})=\dim (\widetilde{\mathscr{O}}_N)$, $\dim (\mathscr{R}_{\beta})=\dim (\mathscr{R}_N)$, and the vectors of $\widetilde{\mathscr{O}}_N$ indexed by the elements of $\alpha$ (respectively the vectors of $\mathscr{R}_N$ indexed by the elements of $\beta$) are linearly independent. It means that if $r_1=\dim (\widetilde{\mathscr{O}}_N)$, $r_2=\dim (\mathscr{R}_N)$, then $\alpha$ has $r_1$ elements, and $\beta$ has $r_2$ elements. Thus, if such a nice column selection $\beta$ (respectively nice row selection $\alpha$) has $k$ elements, then $\beta \subseteq (Q^{\le k-1} \times Q \times J) \cup Q^{\le k-1}$ (respectively $\alpha \subseteq Q^{\le k-1} \times Q \times I$).

We can now formulate the following extension of the method in the previous section in terms of nice selections. To this end, we extend the notion of a partial realization as follows.

\begin{Definition} \label{def:alphabeta_partreal}
	Let $\alpha$ be a nice row selection, and $\beta$ be a nice column selection of an LSS $\Sigma$ of the form \eqref{LSS}. Let
	$\widetilde{B}=\begin{bmatrix} x_0 & B_1 & \ldots & B_{D} \end{bmatrix}$,
	$\widetilde{C}=\begin{bmatrix} C^{\mathrm{T}}_1 & \ldots & C^{\mathrm{T}}_{D} \end{bmatrix}^{\mathrm{T}}$,
	
	\begin{enumerate}
		\item $\Sigma$ is a $\alpha$-partial realization of $f$, if for every $(v,q,i) \in \alpha$, the $\left[ p(q-1)+i \right]$th row of $M^f(v)$ equals the $i$th row of $C_qA_v\widetilde{B}$. This can be formulated equivalently as
		\[
		e_i^{\mathrm{T}}C_qA_v\widetilde{B}=M^f(v)_{p(q-1)+i,:}, \forall (v,q,i) \in \alpha.
		\]
		where $e_i$ denotes the $i$th unit vector in the canonical basis for $\mathbb{R}^p$.
		\item $\Sigma$ is a $\beta$-partial realization of $f$, if for every $(w,q,j) \in \beta$, the $\left[ m(q-1)+j+1 \right]$th column of $M^f(w)$ equals the $j$th column of $\widetilde{C}A_vB_q$, and if for every $w \in \beta$, $1$st column of $M^f(w)$ equals $\widetilde{C}A_vx_0$. This can be formulated equivalently as
		\begin{equation*}
		\begin{aligned}
		& \widetilde{C}A_wB_qe_{j}=M^f(w)_{:,m(q-1)+j+1}, \forall (w,q,j) \in \beta \\
		& \widetilde{C}A_vx_0 = M^f(w)_{:,1} \forall w \in \beta \\
		\end{aligned}
		\end{equation*}
		where $e_{j}$ denotes the $j$th unit vector in the canonical basis for $\mathbb{R}^{m}$.
		\item $\Sigma$ is an $(\alpha,\beta)$ partial realization of $f$, if for every $(v,q,i) \in \alpha$ and for every $(w,q,j) \in \beta$, the entry of $M^f(vw)$ in its $\left[ p(q-1)+i \right]$th row and $\left[ m(q-1)+j+1 \right]$th column equals the $(i,j)$th entry of $C_qA_{wv}B_q$, and if for every $(v,q,i) \in \alpha$ and for every $w \in \beta$, the entry of $M^f(vw)$ in its $\left[ p(q-1)+i \right]$th row and $1$st column equals the $i$th row of $C_qA_{wv}x_0$; alternatively,
		\begin{align*}
			& M^f(vw)_{p(q-1)+i , m(q-1)+j+1} = e_i^{\mathrm{T}}C_qA_{wv}B_qe_j, \\
			& \forall (v,q,i) \in \alpha, (w,q,j) \in \beta; \\
			& M^f(vw)_{p(q-1)+i , 1} = e_i^{\mathrm{T}}C_qA_{wv}x_0,  \\
			& \forall (v,q,i) \in \alpha, w \in \beta.
		\end{align*}
	\end{enumerate}
	
\end{Definition}

Note that the same definition could have been formulated for the arbitrary sets $\alpha \in (Q^* \times Q \times I)$ and $\beta \in (Q^* \times Q \times J) \cup Q^*$ which are not necessarily nice selections. However, Definition \ref{def:alphabeta_partreal} formulated as it is, since the algorithms (which will be presented later on) to acquire $\alpha$, $\beta$ or $(\alpha,\beta)$-partial realizations make use of Theorems \ref{theo:krylov1}-\ref{theo:krylov3}, and for the proof of these theorems, it is crucial that the sets $\alpha$ and $\beta$ define nice selections. This fact is also required for proving Theorem \ref{thm:nice_sel_switch_seq}, which gives the conditions for acquiring a reduced order LSS which has exactly the same input-output behavior as the original one, for a specific switching sequence.

\begin{Theorem} \label{theo:krylov1}
	\emph{(One sided moment matching by the column nice selection $\beta$).}
	Let $\Sigma$ be a realization of $f$ of the form \eqref{LSS}. In addition, let $V \in \mathbb{R}^{n \times r}$ be a full column rank matrix and $\beta$ be a nice column selection such that
	\[
	\mathscr{R}_{\beta} = \IM (V).
	\]
	For each $q \in Q$, define
	\[
	\begin{split}
	\bar{A}_q=V^{-1}A_qV \mbox{, } \bar{C}_q=C_qV \mbox{, } \bar{B}_q=V^{-1}B_q \mbox{, } \bar{x}_0=V^{-1}x_0
	\end{split}
	\]
	where $V^{-1}$ is any left inverse of $V$.
	Then
	\[ \bar{\Sigma}=(n,m,p,Q,\{(\bar{A}_q,\bar{B}_q,\bar{C}_q)\}_{q \in Q},\bar{x}_0) \]
	is a $\beta$-partial realization of $f$.
\end{Theorem}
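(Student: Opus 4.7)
The plan is to mimic the proof of Theorem~\ref{theo:mert1} almost verbatim, the only new ingredient being the use of the ``niceness'' of $\beta$ to guarantee that the relevant intermediate vectors stay inside $\IM(V)=\mathscr{R}_{\beta}$. Since $\Sigma$ realizes $f$, equation \eqref{eq:markov1} lets us rewrite the required identities as
\[
\bar{C}_q\bar{A}_w\bar{B}_{q_0}e_j = C_qA_wB_{q_0}e_j \quad \text{and}\quad \bar{C}_q\bar{A}_w\bar{x}_0=C_qA_wx_0,
\]
which now need to be verified for every $(w,q_0,j)\in\beta$ (resp.\ every $w\in\beta$) and every $q\in Q$.

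The key observation is the following prefix-closure property. If $w=q_1q_2\cdots q_k$ and $(w,q_0,j)\in\beta$, then applying the defining property of a nice column selection $k$ times yields $(q_1\cdots q_i,q_0,j)\in\beta$ for every $i=0,\ldots,k$; similarly, if $w\in\beta$, then every prefix of $w$ lies in $\beta$. By the definition of $\mathscr{R}_{\beta}$, this gives $A_{q_i}\cdots A_{q_1}B_{q_0}e_j\in\IM(V)$ for all $i=0,\ldots,k$, and likewise $A_{q_i}\cdots A_{q_1}x_0\in\IM(V)$ in the initial-state case. Since $V^{-1}V=I$ implies $VV^{-1}y=y$ for every $y\in\IM(V)$, a straightforward induction on the word length then shows
\[
V\bar{A}_w\bar{B}_{q_0}e_j = A_wB_{q_0}e_j \quad \text{and}\quad V\bar{A}_w\bar{x}_0=A_wx_0,
\]
using the expansion $\bar{A}_w=V^{-1}A_{q_k}V\cdots V^{-1}A_{q_1}V$ and the fact that each intermediate vector lies in $\IM(V)$ so the projector $VV^{-1}$ acts as the identity.

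Finally, premultiplying by $C_q$ and using $\bar{C}_q=C_qV$ gives $\bar{C}_q\bar{A}_w\bar{B}_{q_0}e_j=C_qA_wB_{q_0}e_j$ and $\bar{C}_q\bar{A}_w\bar{x}_0=C_qA_wx_0$, which is exactly condition~2 of Definition~\ref{def:alphabeta_partreal}. The only step requiring any thought is the prefix-closure argument, and once that is done the linear-algebraic manipulation is the same as in the proof of Theorem~\ref{theo:mert1}; the main conceptual point to emphasise is that the definition of a nice column selection is tailored precisely so that this induction goes through for every $(w,q_0,j)\in\beta$ (and every $w\in\beta$), rather than merely for $w\in Q^{\le N}$ as in the $N$-partial case.
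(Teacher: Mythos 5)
Your proposal is correct and follows essentially the same route as the paper's proof: the prefix-closure property of a nice column selection guarantees that every intermediate vector $A_{q_i}\cdots A_{q_1}B_{q_0}e_j$ (resp.\ $A_{q_i}\cdots A_{q_1}x_0$) lies in $\IM(V)$, so $VV^{-1}$ acts as the identity on it, and the induction on word length yields $V\bar{A}_w\bar{B}_{q_0}e_j=A_wB_{q_0}e_j$ and $V\bar{A}_w\bar{x}_0=A_wx_0$, after which premultiplication by $C_q$ finishes the argument. The paper phrases the induction hypothesis over elements of $\beta$ of bounded length rather than over prefixes of a fixed word, but this is only a cosmetic difference.
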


The theorem above is similar to Theorem \ref{theo:mert1}. The numerical task is again to compute a matrix $V$ such that $\IM (V)=\mathscr{R}_{\beta}$ in an efficient way. In the model reduction method to be presented, the solution for this task will be explained more in detail.

\begin{proof}\emph{(Theorem \ref{theo:krylov1}).}
	We first show that for any $(w,q,j) \in \beta$,
	
	\begin{equation}  \label{theo:krylob1:eq1}
		V\bar{A}_{w}\bar{B}_qe_j = A_{w}B_{q}e_j.
	\end{equation}
	
	The proof is by induction on the length of $w$. For $w=\epsilon$, $B_qe_j$ is a column of $\mathscr{R}_{\beta}$, and since $ \mathscr{R}_{\beta} = \IM (V)$, $B_{q}e_{j}=Vx_1$ for some $x_1 \in \mathbb{R}^r$. Notice that $VV^{-1}V=V$ , and hence $V\bar{B}_qe_j=VV^{-1}Vx_1=B_{q}e_j$. Assume the claim holds for all $(w,q,j) \in \beta$, $|w| \le k$. Let $(w,q,j) \in \beta$ be such that $|w|=k+1$, $w=u\hat{q}$, $u \in Q^{*}$, $\hat{q} \in Q$. Then from the properties of a nice selection it follows that $(u,q,j) \in \beta$, and hence by the induction hypothesis,
	\[
	V\bar{A}_u\bar{B}_qe_j=A_{u}B_{q}e_j.
	\]
	It then follows that
	\begin{align*}
		& \bar{A}_{\hat{q}}\bar{A}_u\bar{B}_qe_j = V^{-1}A_{\hat{q}}(V\bar{A}_u\bar{B}_qe_j) = \\
		& V^{-1}A_{\hat{q}}A_uB_qe_j = V^{-1}A_wB_qe_j
	\end{align*}
	Notice that from $(w,q,j) \in \beta$ it follows that $A_{w}B_qe_j \in \IM(\mathscr{R}_{\beta}) = \IM (V)$, and hence there exists $x_2 \in \mathbb{R}^{r}$ such that $A_{w}B_qe_j = Vx_2$. It then follows that
	\begin{align*}
		& V\bar{A}_{\hat{q}}\bar{A}_u\bar{B}_qe_j = VV^{-1}Vx_2 = \\
		& Vx_2=A_{w}B_{q}e_j.
	\end{align*}
	That is, we have shown that \eqref{theo:krylob1:eq1} holds.
	From \eqref{theo:krylob1:eq1} it follows that
	\[
	\forall q \in Q: C_qA_{w}B_qe_j=C_qV\bar{A}_w\bar{B}_qe_j=\bar{C}_q\bar{A}_w\bar{B}_qe_j.
	\]
	Similarly, we can show that $\bar{C}_q\bar{A}_w\bar{x}_0=C_qA_wx_0$ for all $w \in \beta$ i.e., $\bar{\Sigma}$ is a $\beta$-partial realization of $f$.
\end{proof}

By duality, we could formulate nice row selections, and also a two sided Krylov subspace projection method, as demonstrated in Theorems \ref{theo:krylov2} and \ref{theo:krylov3}.

\begin{Theorem} \label{theo:krylov2}
	\emph{(One sided moment matching by the row nice selection $\alpha$).}
	Let $\Sigma$ be a realization of $f$ of the form \eqref{LSS}. In addition, let $W \in \mathbb{R}^{r \times n}$ be a full row rank matrix and $\alpha \subseteq Q^* \times Q \times I$ be a nice row selection such that
	\[
	\mathscr{O}_\alpha = \ker (W).
	\]
	For each $q \in Q$, define
	\[
	\begin{split}
	\bar{A}_q=WA_qW^{-1} \mbox{, } \bar{C}_q=C_qW^{-1} \mbox{, } \bar{B}_q=WB_q \mbox{, } \bar{x}_0=Wx_0
	\end{split}
	\]
	where $W^{-1}$ is any right inverse of $W$. Then
	\[
	\bar{\Sigma}=(n,m,p,Q,\{(\bar{A}_q,\bar{B}_q,\bar{C}_q)\}_{q \in Q},\bar{x}_0)
	\]
	is an $\alpha$-partial realization of $f$.
\end{Theorem}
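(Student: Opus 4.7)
The plan is to mirror the inductive argument of Theorem~\ref{theo:krylov1} on the output side. The governing observation, dual to the one used there, is that for any $(v,q,i)\in\alpha$ the row vector $e_i^{\mathrm T} C_q A_v$ annihilates $\mathscr{O}_\alpha = \ker W$ by the very definition of $\mathscr{O}_\alpha$, and hence lies in the row space of $W$. Since $W^{-1}$ is a right inverse of $W$, so that $WW^{-1}=I_r$, this translates into the identity
\begin{equation*}
e_i^{\mathrm T} C_q A_v \, W^{-1}W = e_i^{\mathrm T} C_q A_v \qquad \text{for every } (v,q,i)\in\alpha.
\end{equation*}
This is the dual counterpart of the fact that $VV^{-1}$ fixes every column of $V$ exploited in Theorem~\ref{theo:krylov1}.

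With that identity in hand, the main step is an induction on $|v|$ proving
\begin{equation*}
e_i^{\mathrm T} \bar C_q \bar A_v \, W = e_i^{\mathrm T} C_q A_v \qquad \text{for all } (v,q,i)\in\alpha.
\end{equation*}
The base case $v=\varepsilon$ reduces to $e_i^{\mathrm T} C_q W^{-1}W = e_i^{\mathrm T}C_q$, which is precisely the governing identity at $v=\varepsilon$. For the inductive step I decompose a word $v\in Q^{+}$ as $v=\sigma u$ with $\sigma\in Q$ the \emph{first} letter, so $|u|=|v|-1$; the nice-row-selection property then yields $(u,q,i)\in\alpha$, so the induction hypothesis applies. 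Using the conventions $A_{\sigma u}=A_u A_\sigma$, $\bar A_{\sigma u}=\bar A_u\bar A_\sigma$ and the definition $\bar A_\sigma = WA_\sigma W^{-1}$, one has
\begin{equation*}
e_i^{\mathrm T}\bar C_q \bar A_v W = (e_i^{\mathrm T}\bar C_q\bar A_u\,W)\,A_\sigma W^{-1}W = e_i^{\mathrm T} C_q A_u A_\sigma\,W^{-1}W = e_i^{\mathrm T} C_q A_v\,W^{-1}W,
\end{equation*}
where the middle equality uses the induction hypothesis. One final application of the governing identity to $(v,q,i)\in\alpha$ itself collapses the right-hand side to $e_i^{\mathrm T} C_q A_v$, closing the induction.

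The theorem then follows by post-multiplying the identity by $B_{q_0}$ and by $x_0$, and invoking $\bar B_{q_0}=WB_{q_0}$ and $\bar x_0 = Wx_0$: for every $(v,q,i)\in\alpha$ and $q_0\in Q$ one obtains $e_i^{\mathrm T}\bar C_q\bar A_v\bar B_{q_0} = e_i^{\mathrm T} C_q A_v B_{q_0}$ and $e_i^{\mathrm T}\bar C_q\bar A_v\bar x_0 = e_i^{\mathrm T} C_q A_v x_0$. Reading these row equalities column by column against Definition~\ref{def:alphabeta_partreal}(1) yields that the $[p(q-1)+i]$th row of $M^f(v)$ is matched by $\bar\Sigma$, so $\bar\Sigma$ is an $\alpha$-partial realization of $f$.

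The main subtlety I expect, as in Theorem~\ref{theo:krylov1}, is that $W^{-1}W$ is \emph{not} the identity on $\mathbb R^n$ (only $WW^{-1}=I_r$ holds), so the cancellation $W^{-1}W\mapsto I$ is never automatic and must be justified each time by verifying that the row vector being multiplied lies in the row space of $W$. Prefix-closure of the nice row selection is precisely the structural property that guarantees this: every intermediate index $(u,q,i)$ arising in the recursion remains in $\alpha$, so the governing identity is available at each step. This prefix closure is also what aligns the decomposition $v=\sigma u$ (first-letter stripping) with the direction in which the matrices $A_v$ accumulate under the convention $A_{\sigma u}=A_u A_\sigma$.
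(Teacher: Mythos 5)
Your proposal is correct and follows essentially the same route as the paper's proof: the same induction on $|v|$ establishing $e_i^{\mathrm T}\bar C_q\bar A_v W = e_i^{\mathrm T}C_q A_v$, the same use of first-letter stripping from the nice row selection, and the same key fact that the rows indexed by $\alpha$ lie in the row space of $W$ (the paper phrases this via $\widetilde{\mathscr{O}}_\alpha=\IM(W^{\mathrm T})$, you via annihilation of $\ker W$, which is equivalent). No gaps.
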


\begin{proof}\emph{(Theorem \ref{theo:krylov2}).}
	This follows from Theorem \ref{theo:krylov1} by duality. Consider a $W \in \mathbb{R}^{r \times n}$ which satisfies the assumption of the theorem. Recall that $\widetilde{\mathscr{O}}_\alpha = \SPAN \{ (e_i^\mathrm{T}C_qA_v)^\mathrm{T} \mid (v,q,i) \in \alpha \}$. Then, it is easy to see that $\widetilde{\mathscr{O}}_\alpha = \IM (W^{\mathrm{T}})$. We will show that for any $(v,q,i) \in \alpha$,
	\begin{equation} \label{eq:proof_krylov2_1}
		e_i^{\mathrm{T}}\bar{C}_q\bar{A}_vW=e_i^{\mathrm{T}}C_qA_v
	\end{equation}
	The proof is again by induction on the length of $v$. For $v=\varepsilon$, $(e_i^{\mathrm{T}}C_q)^\mathrm{T}$ belongs to $\widetilde{\mathscr{O}}_{\alpha}$ and hence, $e_i^{\mathrm{T}}C_q=x_1^{\mathrm{T}}W$ for some $x_1 \in \mathbb{R}^r$. Notice that $WW^{-1}W=W$ hence $e_i^{\mathrm{T}}\bar{C}_qW=x_1^{\mathrm{T}}WW^{-1}W=e_i^{\mathrm{T}}C_q$. Assume the claim holds for all $(v,q,i) \in \alpha$, $|v| \leq k$. Let $(v,q,i) \in \alpha$ be such that $|v|=k+1$, $v=\hat{q}u$, $u \in Q^*$, $\hat{q} \in Q$. Then from the properties of a nice selection it follows that $(u,q,i) \in \alpha$, and hence, by the induction hypothesis
	\[ e_i^{\mathrm{T}}\bar{C}_q\bar{A}_uW=e_i^{\mathrm{T}}C_qA_u. \]
	It then follows that
	\begin{equation} \label{eq:proof_krylov2_2}
		\begin{aligned} 
			& e_i^{\mathrm{T}}\bar{C}_q\bar{A}_u\bar{A}_{\hat{q}}=(e_i^{\mathrm{T}}\bar{C}_q\bar{A}_uW)A_{\hat{q}}W^{-1}= \\
			& e_i^{\mathrm{T}}C_qA_uA_{\hat{q}}W^{-1}=e_i^{\mathrm{T}}C_qA_vW^{-1}
		\end{aligned}
	\end{equation}
	Notice that from $(v,q,i) \in \alpha$ it follows that $(e_i^{\mathrm{T}}C_qA_v)^\mathrm{T} \in \widetilde{\mathscr{O}}_{\alpha} = \IM (W^\mathrm{T})$ and hence there exists $x_2 \in \mathbb{R}^r$ such that
	\begin{equation} \label{eq:proof_krylov2_3}
		e_i^{\mathrm{T}}C_qA_v=x_2^{\mathrm{T}}W.
	\end{equation}
	It then follows from \eqref{eq:proof_krylov2_2} and \eqref{eq:proof_krylov2_3} that
	\[ e_i^{\mathrm{T}}\bar{C}_q\bar{A}_u\bar{A}_{\hat{q}}W=x_2^{\mathrm{T}}WW^{-1}W=x_2^{\mathrm{T}}W=e_i^{\mathrm{T}}C_qA_v \]
	That is, we have shown that \eqref{eq:proof_krylov2_1} holds. From \eqref{eq:proof_krylov2_1} it follows that
	\[ \forall q \in Q: e_i^{\mathrm{T}}C_qA_vB_q=e_i^{\mathrm{T}}\bar{C}_q\bar{A}_vWB_q=e_i^{\mathrm{T}}\bar{C}_q\bar{A}_v\bar{B}_q, \]
	Similarly, we can show that $\bar{C}_q\bar{A}_v\bar{x}_0=C_qA_vx_0$ for all $(v,q,i) \in \alpha$ i.e., $\bar{\Sigma}$ is an $\alpha$-partial realization of $f$.
\end{proof}

\begin{Theorem} \label{theo:krylov3}
	\emph{(Two sided moment matching by row/column nice selections $\alpha$ and $\beta$).}
	Let $\Sigma$ be a realization of $f$ of the form \eqref{LSS}. Let $\alpha \subseteq Q^* \times Q \times I$ be a nice row selection, $\beta \subseteq Q^* \times Q \times J$ be a nice column selection. Let $W \in \mathbb{R}^{r \times n}$ be a full row rank matrix and $V \in \mathbb{R}^{n \times r}$ be a full column rank matrix such that
	\begin{enumerate}
		\item $\Rank (WV)=r$,
		\item $\mathscr{O}_{\alpha} = \ker (W)$,
		\item $\mathscr{R}_{\beta} = \IM (V)$.
	\end{enumerate}
	For all $q \in Q$, define
	\[
	\begin{split}
	\bar{A}_q=WA_qV(WV)^{-1} \mbox{, } \bar{C}_q=C_qV(WV)^{-1} \mbox{, } \bar{B}_q=WB_q \mbox{, } \bar{x}_0=Wx_0
	\end{split}
	\]
	Then
	\[ \bar{\Sigma}=(n,m,p,Q,\{(\bar{A}_q,\bar{B}_q,\bar{C}_q)\}_{q \in Q},\bar{x}_0) \]
	is an $(\alpha,\beta)$-partial realization of $f$, and it is also an $\alpha$ and $\beta$-partial realization of $f$.
\end{Theorem}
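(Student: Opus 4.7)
The plan is to observe that the two-sided construction is simultaneously a special case of the one-sided column construction of Theorem \ref{theo:krylov1} and of the one-sided row construction of Theorem \ref{theo:krylov2}, for a particular choice of auxiliary matrix, and then to glue together the two intermediate identities that appear inside those proofs by means of a projection-identity.

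I would begin by setting $\bar{V}:=V(WV)^{-1}\in\mathbb{R}^{n\times r}$, which is well-defined thanks to the rank condition $\Rank(WV)=r$. A short verification gives $W\bar{V}=I_r$, so $W$ is a left inverse of $\bar{V}$ and, symmetrically, $\bar{V}$ is a right inverse of $W$; moreover $\IM(\bar{V})=\IM(V)=\mathscr{R}_\beta$, since $(WV)^{-1}$ is invertible. Rewritten in terms of $\bar{V}$, the matrices of $\bar{\Sigma}$ become $\bar{A}_q=WA_q\bar{V}$, $\bar{B}_q=WB_q$, $\bar{C}_q=C_q\bar{V}$, $\bar{x}_0=Wx_0$.

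These formulas are precisely those produced by Theorem \ref{theo:krylov1} when $\bar{V}$ plays the role of $V$ and $W$ plays the role of the left inverse $V^{-1}$; that theorem then immediately yields that $\bar{\Sigma}$ is a $\beta$-partial realization of $f$. Symmetrically, they are exactly the formulas produced by Theorem \ref{theo:krylov2} with $W$ kept as is and $\bar{V}$ in the role of the right inverse $W^{-1}$; hence $\bar{\Sigma}$ is also an $\alpha$-partial realization of $f$. So the last two assertions of the theorem are immediate corollaries of the one-sided results.

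For the remaining $(\alpha,\beta)$-matching condition I would reuse the key identities obtained inside the inductive proofs of Theorems \ref{theo:krylov1} and \ref{theo:krylov2}: for every $(w,q,j)\in\beta$, $\bar{V}\bar{A}_w\bar{B}_qe_j=A_wB_qe_j$ (and $\bar{V}\bar{A}_w\bar{x}_0=A_wx_0$ when $w\in\beta$), and for every $(v,q,i)\in\alpha$, $e_i^{\mathrm{T}}\bar{C}_q\bar{A}_vW=e_i^{\mathrm{T}}C_qA_v$. Inserting the identity $W\bar{V}=I_r$ between $\bar{A}_v$ and $\bar{A}_w$ then yields
\[
e_i^{\mathrm{T}}\bar{C}_q\bar{A}_{wv}\bar{B}_qe_j=(e_i^{\mathrm{T}}\bar{C}_q\bar{A}_vW)(\bar{V}\bar{A}_w\bar{B}_qe_j)=e_i^{\mathrm{T}}C_qA_vA_wB_qe_j=e_i^{\mathrm{T}}C_qA_{wv}B_qe_j,
\]
where I use the conventions $\bar{A}_{wv}=\bar{A}_v\bar{A}_w$ and $A_{wv}=A_vA_w$ from Notation \ref{Notation2}. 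An analogous computation with $\bar{x}_0,x_0$ in place of $\bar{B}_qe_j,B_qe_j$ handles the sub-case $w\in\beta$. The main obstacle is conceptual rather than computational: one must recognize that the rank condition $\Rank(WV)=r$ is precisely what promotes the oblique projector $V(WV)^{-1}W$ to an \emph{identity} on the reduced side, namely $W\bar{V}=I_r$, which is exactly the glue that lets the two one-sided identities be concatenated; once the roles of $V$, $W$, and $\bar{V}$ have been assigned correctly, no fresh induction on word length is needed.
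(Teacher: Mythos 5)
Your proposal is correct and follows essentially the same route as the paper: the paper sets $P=V(WV)^{-1}$ (your $\bar V$), invokes Theorems \ref{theo:krylov1} and \ref{theo:krylov2} to get the $\beta$- and $\alpha$-partial realization claims, and then splices the two intermediate identities $P\bar{A}_w\bar{B}_qe_j=A_wB_qe_j$ and $e_i^{\mathrm{T}}\bar{C}_q\bar{A}_vW=e_i^{\mathrm{T}}C_qA_v$ together using $WP=I_r$, exactly as you do. Your version is slightly more explicit about why the one-sided theorems apply (the role assignment of $\bar V$ as the matrix and $W$ as its left inverse), but the argument is the same.
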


\begin{proof}\emph{(Theorem \ref{theo:krylov3}).}
	Define $P=V(WV)^{-1}$. Notice that the conditions of the theorem imply $WV$ is nonsingular so its inverse $(WV)^{-1}$ exists. Notice that since $P$ is again $n \times r$ with full column rank, a left inverse $P^{-1}$ of $P$ can be defined and $P$ is a right inverse of $W$. It then follows from Theorem~\ref{theo:krylov1} and Theorem~\ref{theo:krylov2} that $\bar{\Sigma}$ is an $\alpha$- and $\beta$-partial realization of $f$. More clearly, from the proof of Theorem \ref{theo:krylov1}, \eqref{theo:krylob1:eq1}, it follows that for any $(w,q,j) \in \beta$,
	\begin{equation} \label{theo:krylov3:eq1} 
		P\bar{A}_w\bar{B}_qe_j=A_wB_qe_j
	\end{equation}
	Using duality or from the proof of Theorem~\ref{theo:krylov2}, \eqref{eq:proof_krylov2_1}, it can be shown that for any $(v,q,i) \in \alpha$,
	\begin{equation} \label{theo:krylov3:eq2}
		e_i^{\mathrm{T}}\bar{C}_q\bar{A}_v W = e_i^{\mathrm{T}}C_qA_v.
	\end{equation}
	Notice that $WP=I_r$ and hence, combining \eqref{theo:krylov3:eq1} and \eqref{theo:krylov3:eq2} implies
	\begin{align*}
		& e_i^{\mathrm{T}}\bar{C}_q\bar{A}_v\bar{A}_w\bar{B}_qe_j = e_i^{\mathrm{T}}\bar{C}_q\bar{A}_vWP\bar{A}_w\bar{B}_qe_j = \\
		& e_i^{\mathrm{T}}C_qA_vA_wB_qe_j.
	\end{align*}
	The part about $x_0$ can be proven similarly. It follows that $\bar{\Sigma}$ is an $(\alpha,\beta)$ partial realization of $f$. 
\end{proof}

\begin{Remark}
	\emph{(Relation between $N$, $2N$ and $\alpha$, $\beta$, $(\alpha,\beta)$-partial realizations).}
	Note that if the $V$ matrix in Theorem~\ref{theo:krylov1} is such that $\IM(V)= \mathscr{R}_N$ and $W$ matrix in Theorem~\ref{theo:krylov2} is such that $ker(W)=\mathscr{O}_N$, then the acquired reduced order systems would be $N$-partial realizations for each case. Likewise, if the $V$ and $W$ matrices in Theorem~\ref{theo:krylov3} can be found such that $\IM(V)= \mathscr{R}_N$, $ker(W)=\mathscr{O}_N$ and $\Rank(WV)=r$, then the acquired reduced order system would be a $2N$-partial realization. In this sense, the method given in this section is a generalization of the previous method. In other words, $N$ or $2N$-partial realizations are just $\beta$, $\alpha$ or $(\alpha,\beta)$-partial realizations for a specific choice of $\alpha$, $\beta$ or $(\alpha,\beta)$. These choices would be in the following form: The set $\alpha$ contains all the elements of the form $(v,q,i) \in (Q^{\leq N} \times Q \times I)$; the set $\beta$ contains all the elements of the form $(w,q,j) \in (Q^{\leq N} \times Q \times J)$ and $w \in Q^{\leq N}$.
\end{Remark}

Now we will present three efficient algorithms of model reduction by moment matching, which compute either an $\alpha$, $\beta$, $(\alpha,\beta)$-partial realization $\bar{\Sigma}$ for an $f$ which is realized by an LSS $\Sigma$. Firstly, we present algorithms for computing some subspaces of $\mathscr{R}_{\beta}$ and $\mathscr{O}_{\alpha}$. Then, those algorithms will be used to acquire the matrices $V$ and $W$ in the Theorems \ref{theo:krylov1}-\ref{theo:krylov3} and hence, to formulate a global model reduction by moment matching method for LSSs.

\begin{Definition} \label{def:subsets}
	\emph{(The languages related to $\beta$ and $\alpha$).}
	Let $\beta$ be a column nice selection and  $\mathbb{J}_\beta= \{ (q,j) \in Q \times J \mid \exists w \in Q^* \mbox{ such that } (w,q,j) \in \beta \}$. Define the corresponding languages related to $\beta$ as
	\begin{align} \label{eq:reach_language}
		& L^\beta_0= \{ w \in Q^* \mid w \in \beta \} \\
		& L^\beta_{q,j} = \{ w \in Q^* \mid (w,q,j) \in \beta \} \mbox{, } \forall (q,j) \in \mathbb{J}_\beta.
	\end{align}
	Furthermore, let $\alpha$ be a row nice selection and $\mathbb{I}_\alpha= \{ (q,i) \in Q \times I \mid \exists v \in Q^* \mbox{ such that } (v,q,i) \in \alpha \}$. Define the corresponding languages related to $\alpha$ as
	\begin{equation} \label{eq:obs_language}
		L^\alpha_{q,i}= \{ v \in Q^* \mid (v,q,i) \in \alpha \} \mbox{, } \forall (q,i) \in \mathbb{I}_\alpha.
	\end{equation}
	The numbers $t_\beta=|\mathbb{J}_\beta|+1$ and $t_\alpha=|\mathbb{I}_\alpha|$ will be called the \emph{subset cardinality} of $\beta$ and $\alpha$ respectively.
\end{Definition}

\begin{Example}
	Suppose a column nice selection $\beta$ related to an LSS $\Sigma=(1,1,n,Q,\{(A_q,B_q,C_q)|q \in Q\},x_0)$ with $Q=\{1,2,3,4\}$ is given by
	\begin{align*}
		\beta= & \{ \varepsilon, 2, (\varepsilon,1,1), (1,1,1), (3,1,1), (32,1,1), (34,1,1), \\
		& (\varepsilon,3,1), (\varepsilon,4,1), (1,4,1) \}.
	\end{align*}
	Then the set $\mathbb{J}_\beta$ and the corresponding languages $L^\beta_0, L^\beta_{1,1}, L^\beta_{3,1}, L^\beta_{4,1}$ are given by
	\begin{align*}
	    & \mathbb{J}_\beta=\{ (1,1), (3,1), (4,1) \} \\
		& L^\beta_0= \{ \varepsilon, 2 \} \\
		& L^\beta_{1,1}= \{ \varepsilon, 1, 3, 32, 34 \} \\
		& L^\beta_{3,1}= \{ \varepsilon \} \\
		& L^\beta_{4,1}= \{ \varepsilon, 1 \}.
	\end{align*}
	Note that the number $t_\beta=|\mathbb{J}_\beta|+1=4$ is the subset cardinality of $\beta$ i.e., there are $4$ languages related to $\beta$, namely $L_0, L_{1,1}, L_{3,1}$ and $L_{4,1}$.
\end{Example}

\begin{Definition} \label{def:NDFA}
	A non-deterministic finite state automaton (NDFA) is a tuple $\mathcal{A}=(S,Q,\{\rightarrow_q\}_{q \in Q} ,F, s_0)$ such that
	\begin{enumerate}
		\item $S$ is the finite state set,
		\item $F \subseteq S$ is the set of accepting (final) states,
		\item $\rightarrow_q \subseteq S \times S$ is the state transition relation labelled by $q \in Q$,
		\item $s_0 \in S$ is the initial state.
	\end{enumerate}
	For every $v \in Q^{*}$, define $\rightarrow_v$ inductively as follows: $\rightarrow_{\epsilon}=\{ (s,s) \mid s \in S \}$ and $\rightarrow_{vq} = \{ (s_1,s_2) \in S \times S \mid \exists s_3 \in S: (s_1,s_3) \in \rightarrow_{v} \mbox{ and } (s_3,s_2) \in \rightarrow_q \}$ for all $q \in Q$. We denote the fact $(s_1,s_2) \in \rightarrow_v$ by $s_1 \rightarrow_v s_2$. The fact that there exists $s_2$ such that $s_1 \rightarrow_v s_2$ is denoted by $s_1 \rightarrow_v$. Define the language $L(\mathcal{A})$ accepted by $\mathcal{A}$ as 
	\[
	L(\mathcal{A})=\{ v \in Q^{*} \mid \exists s \in F: s_0 \rightarrow_v s \}.
	\]
	We say that $\mathcal{A}$ is \emph{co-reachable}, if from any state a final state can be reached, i.e., for any $s \in S$, there exists $v \in Q^*$ and $s_f \in F$ such that $s \rightarrow_v s_f$. It is well-known that if $\mathcal{A}$ accepts $L$, then we can always compute an NDFA $\mathcal{A}_{co-r}$ from $\mathcal{A}$ such that $\mathcal{A}_{co-r}$ accepts $L$ and it is co-reachable.
\end{Definition}

In the sequel, we will assume that the languages $L^\beta_0$, $L^\beta_{q,j}$, $L^\alpha_{q,i}$ associated with a nice selection $\beta$ or $\alpha$ are regular i.e., there exists an NDFA accepting them. By using the definitions above  we can define the subspaces $\mathscr{R}_{L,j}(G)$ and $\mathscr{O}_{L,i}(H)$ for real matrices $G$ and $H$ as
\begin{align*}
	& \mathscr{R}_{L,j}(G)= \SPAN \{ A_vG_{:,j} \mid v \in L \} \\
	& \mathscr{O}_{L,i}(H)= \bigcap_{v \in L} \ker (H_{i,:}A_v)
\end{align*}
and use them to rewrite the spaces $\mathscr{R}_{\beta}$ and $\mathscr{O}_{\alpha}$ in the following form:
\begin{align} \label{eq:new_Rbeta_Oalpha}
	& \mathscr{R}_{\beta}= \mathscr{R}^{\beta}_{L_0,1}(x_0) + \sum\limits_{(q,j) \in \mathbb{J}_\beta}^{}\mathscr{R}^{\beta}_{(L_{q,j}),j}(B_q) \\
	& \mathscr{O}_{\alpha} = \bigcap_{(q,i) \in \mathbb{I}_\alpha} \mathscr{O}^{\alpha}_{(L_{q,i}),i}(C_q).
\end{align}

Now we are ready to present the two algorithms to compute a representation for the subspaces $\mathscr{R}_{L,j}(G)$ and $\mathscr{O}_{L,i}(H)$ respectively. Observe from \eqref{eq:new_Rbeta_Oalpha}, those two algorithms can be subsequently used for computing the $V$ and $W$ matrices such that $\IM (V) = \mathscr{R}_{\beta}$ and $\ker(W) = \mathscr{O}_{\alpha}$ for a given $\beta$ or $\alpha$. These algorithms are similar to the ones in \cite{bastugCDC2014} where they were used for model reduction of a discrete time LSS with respect to a certain set of switching sequences.

\begin{algorithm}
	\caption{
		Calculate  a matrix representation of $\mathscr{R}_{K,j}(G)$, 
		\newline
		\textbf{Inputs}: $(\{A_q\}_{q \in Q},G)$ and $\hat{\mathcal{A}}=(S,\{\rightarrow_q \}_{q \in Q},F,s_0)$ such that $L(\hat{\mathcal{A}})=K$, $j \in J$, $F=\{s_{f_1},\dots s_{f_k}\}$, $k \geq 1$ and $\hat{\mathcal{A}}$  is co-reachable.
		\newline
		\textbf{Outputs:} ${V}  \in \mathbb{R}^{n \times \hat{r}}$ such that $\Rank(\hat{V})=\hat{r}$,
		$\IM (\hat{V}) = \mathscr{R}_{K,j}(G)$. 
	}
	\label{alg4}
	\begin{algorithmic}[1]
		\STATE $\forall s \in S \backslash \{s_0\}: V_s:=0$.
		\STATE $V_{s_0}:=\mathbf{orth}(G_{:,j})$.
		\label{alg4.0}
		\STATE $\mathrm{flag}=0$.
		\WHILE{$\mathrm{flag}=0$}
		\label{alg4.1}
		\STATE $\forall s \in S: V_s^{old} := V_s$
		\FOR{$s \in S$}
		\STATE  $M_s:=V_s$
		\FOR{$q \in Q, s^{'} \in S: s^{'} \rightarrow_q s$}
		\STATE $M_s:=\begin{bmatrix} M_s, & A_qV^{old}_{s^{'}} \end{bmatrix}$
		\ENDFOR
		\STATE $V_s := \mathbf{orth}(M_s)$
		\ENDFOR
		\IF{$\forall s \in S: \Rank (V_s) = \Rank (V^{old}_s$)}
		\STATE{$\mathrm{flag}=1$.}
		\ENDIF 
		\ENDWHILE
		\RETURN $\hat{V}=\mathbf{orth} \left( \begin{bmatrix} V_{s_{f_1}} & \cdots & V_{s_{f_k}} \end{bmatrix} \right)$.
	\end{algorithmic}
\end{algorithm}

\begin{Lemma}[Correctness of Algorithm \ref{alg4} -- Algorithm \ref{alg5}] \label{lem:correctness}
	Assume $L$ is regular and $\hat{\mathcal{A}}$ is a co-reachable NDFA which accepts $L$. Algorithm \ref{alg4} returns a full column rank matrix $V$ such that $\IM (V)= \mathscr{R}_{L,j}(G)$, and Algorithm \ref{alg5} returns a full row rank matrix $W$ such that $\ker(W) = \mathscr{O}_{L,i}(H)$.
\end{Lemma}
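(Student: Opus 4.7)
The plan is to prove correctness of Algorithm \ref{alg4}; correctness of Algorithm \ref{alg5} will follow by duality. For each state $s \in S$ of $\hat{\mathcal{A}}$, define the language $\mathcal{L}_s = \{ v \in Q^* \mid s_0 \rightarrow_v s \}$ and the subspace $\mathscr{R}_s = \SPAN \{ A_v G_{:,j} \mid v \in \mathcal{L}_s \}$. Since $L(\hat{\mathcal{A}}) = \bigcup_{s_f \in F} \mathcal{L}_{s_f}$, it follows that $\mathscr{R}_{L,j}(G) = \sum_{s_f \in F} \mathscr{R}_{s_f}$. Thus, if one shows that at termination $\IM(V_s) = \mathscr{R}_s$ for every $s \in S$, then the output $\hat{V}$ satisfies $\IM(\hat{V}) = \sum_{s_f \in F} \mathscr{R}_{s_f} = \mathscr{R}_{L,j}(G)$ and has full column rank since it is produced by $\mathbf{orth}$.

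The key observation is that the family $\{ \mathscr{R}_s \}_{s \in S}$ is the least fixed point of the monotone operator $\Phi$ defined componentwise by
\begin{equation*}
    \Phi_s(\{U_{s'}\}_{s' \in S}) = \begin{cases} \SPAN\{G_{:,j}\} + \sum\limits_{q \in Q,\ s' \rightarrow_q s_0} A_q U_{s'} & \text{if } s = s_0, \\ \sum\limits_{q \in Q,\ s' \rightarrow_q s} A_q U_{s'} & \text{otherwise,} \end{cases}
\end{equation*}
which follows from the decomposition: $v \in \mathcal{L}_s$ iff either $v = \varepsilon$ and $s = s_0$, or $v = v'q$ with $s_0 \rightarrow_{v'} s' \rightarrow_q s$ for some $s' \in S$ and $q \in Q$. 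Algorithm~\ref{alg4} implements the standard Kleene iteration for this fixed point: from the initialization $V_{s_0} = G_{:,j}$, $V_s = 0$ for $s \neq s_0$, it synchronously updates every $V_s$ by appending the blocks $A_q V_{s'}^{\mathrm{old}}$ for each transition $s' \rightarrow_q s$ and then re-orthonormalizing.

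The proof would then proceed in three steps. First, induction on the iteration count $k$ yields the inclusion $\IM(V_s^{(k)}) \subseteq \mathscr{R}_s$, using that $A_q \mathscr{R}_{s'} \subseteq \mathscr{R}_s$ whenever $s' \rightarrow_q s$ (since every $v \in \mathcal{L}_{s'}$ yields $vq \in \mathcal{L}_s$). Second, termination: the subspaces $\IM(V_s^{(k)})$ are monotonically non-decreasing in $k$ and bounded inside $\mathbb{R}^n$, so the ranks stabilize after at most $n|S|$ iterations, and the rank-equality test in the algorithm detects this. Third, the reverse inclusion $\mathscr{R}_s \subseteq \IM(V_s)$ at termination is shown by induction on $|v|$ for $v \in \mathcal{L}_s$: the base case $v = \varepsilon$ is covered by the initialization of $V_{s_0}$, while the inductive step writes $v = v'q$ with $s_0 \rightarrow_{v'} s' \rightarrow_q s$, applies the induction hypothesis to get $A_{v'} G_{:,j} \in \IM(V_{s'})$, and invokes rank stability --- which, because the algorithm performs a synchronous update, forces $A_q \IM(V_{s'}) \subseteq \IM(V_s)$ \emph{simultaneously} at every state --- to conclude $A_v G_{:,j} \in \IM(V_s)$.

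For Algorithm~\ref{alg5}, duality suffices: since $H_{i,:} A_v = (A_v^{\mathrm{T}} H_{i,:}^{\mathrm{T}})^{\mathrm{T}}$, we have $\mathscr{O}_{L,i}(H) = \bigl( \SPAN\{ A_v^{\mathrm{T}} H_{i,:}^{\mathrm{T}} \mid v \in L \} \bigr)^{\perp}$. Algorithm~\ref{alg5} runs Algorithm~\ref{alg4} on the transposed data $(\{A_q^{\mathrm{T}}\}_{q \in Q}, H^{\mathrm{T}})$ to obtain a full column rank $V$ whose image equals this span, and returns $W = V^{\mathrm{T}}$, giving $\ker(W) = \IM(V)^{\perp} = \mathscr{O}_{L,i}(H)$ with $W$ of full row rank. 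The main obstacle I foresee is the reverse-inclusion step in Algorithm~\ref{alg4}: one must carefully use the fact that the termination test requires rank stability at every state simultaneously, so that the next (vacuous) update has already absorbed all one-step extensions $A_q \IM(V_{s'})$ into $\IM(V_s)$; a purely componentwise argument would be insufficient.
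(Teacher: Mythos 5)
Your proposal is correct and follows essentially the same route as the paper: both decompose $\mathscr{R}_{L,j}(G)$ into the per-state subspaces $\mathscr{R}_s = \SPAN\{A_vG_{:,j} \mid s_0 \rightarrow_v s\}$, show that the loop computes them by a monotone iteration that stabilizes (the paper via the exact invariant $\IM(V_s)=V_{s,i}$ over words of length at most $i$, you via the two inclusions and the closure property at termination --- an equivalent packaging), and sum over the final states; the observability case is dispatched by duality in both. The only quibble is your remark that Algorithm 5 literally runs Algorithm 4 on transposed data: it is instead a structurally dual loop with transitions reversed and the roles of $s_0$ and $F$ swapped (which also silently reverses the words, matching $A_v^{\mathrm{T}} = A^{\mathrm{T}}_{\tilde v}$), but this does not affect the validity of the duality argument.
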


\begin{proof}\emph{(Lemma \ref{lem:correctness}).}
	We prove only the first statement of the lemma, the second one can be shown using duality. Let $V_{s,i}=\SPAN\{ \IM (A_vG_{:,j}) \mid v \in Q^*, |v| \le i, s_0 \rightarrow_v s\}$, $i \in \mathbb{N}$. It then follows that after the execution of Step \ref{alg4.0}, $\IM (V_s)=V_{s,0}$ for all $s \in S$. Moreover, by induction it follows that
	\[
	V_{s,i+1}=V_{s,i} + \sum_{q \in Q, s^{'} \in S, s^{'} \rightarrow_q s} A_qV_{s^{'},i}
	\]
	for all $i=0,1,\ldots$ and $s \in S$. Hence, by induction it follows that at the $i$th iteration of the loop in Step \ref{alg4.1}, $\IM (V_s)=V_{s,i}$. Notice that $V_{s,i} \subseteq V_{s,i+1} \subseteq \mathbb{R}^n$ and hence there exists $k_s$ such that $V_{s,k_s}=V_{s,k}$, $k \ge k_s$, and thus $V_{s,k}=R_s$,
	\[
	R_s=\SPAN\{  A_vG_{:,j} \mid v \in Q^{*}, s_0 \rightarrow_v s \}.
	\]
	Let $k= \max \{k_s |  s \in S\}$. It then follows that $V_{s,k+1}=V_{s,k}=\IM (V_s)$ for all $s \in Q$ and hence after $k$ iterations, the loop \ref{alg4.1} will terminate. Moreover, in that case,  $\IM (V_{s_{f_i}})=R_{s_{f_i}}$, $i \in \{1,\cdots,k\}$. But notice that for any $v \in Q^*$, $q \in Q$, $s_0 \rightarrow_v s_{f_i}$ if and only if $v \in K$, and $s_0 \rightarrow_{qv} s_{f_i}$ if and only if $qv \in K$, $i \in \{1,\cdots,k\}$. Hence, $\sum\limits_{s \in F}^{}R_s=\mathscr{R}_{L,j}$ and thus $\IM \left( \left[ \begin{array}{ccc} V_{s_{f_1}} & \cdots & V_{s_{f_k}} \end{array} \right] \right)=\mathscr{R}_{L,j}$.
\end{proof}

\begin{algorithm}
	\caption{
		Calculate a matrix representation of $\mathscr{O}_{K,i}(H)$, 
		\newline
		\textbf{Inputs}: $(\{A_q\}_{q \in Q},H)$ and $\hat{\mathcal{A}}=(S,\{\rightarrow_q \}_{q \in Q},F,s_0)$ such that $L(\hat{\mathcal{A}})=K$, $i \in I$, $F=\{s_{f_1},\cdots s_{f_k}\}$, $k \geq 1$ and $\hat{\mathcal{A}}$  is co-reachable.
		\newline
		\textbf{Outputs:} $\hat{W}  \in \mathbb{R}^{\hat{r} \times n}$ such that $\Rank(\hat{W})=\hat{r}$,
		$\ker(\hat{W}) = \mathscr{O}_{K,i}(H)$.
	}
	\label{alg5}
	\begin{algorithmic}[1]
		\STATE $\forall s \in S \backslash F: W_s:=0$.
		\STATE $\forall s \in F: W_{s}^{\mathrm{T}}:=\mathbf{orth}(H_{i,:}^{\mathrm{T}})$.
		\STATE $\mathrm{flag}=0$.
		\label{alg2.0}
		\WHILE{$\mathrm{flag}=0$}
		\label{alg2.1}
		\STATE $\forall s \in S: W_s^{old} := W_s$
		\FOR{$s \in S$}
		\STATE  $M_s:=W_s$
		\FOR{$q \in Q, s^{'} \in S: s \rightarrow_q s^{'}$}
		\STATE $M_s:=\begin{bmatrix} M_s \\ W^{old}_{s^{'}}A_q \end{bmatrix}$
		\ENDFOR
		\STATE $W_s^{\mathrm{T}} := \mathbf{orth}(M_s^{\mathrm{T}})$
		\ENDFOR
		\IF{$\forall s \in S: \Rank (W_s) = \Rank (W^{old}_s)$}
		\STATE{$\mathrm{flag}=1$.}
		\ENDIF
		\ENDWHILE
		\RETURN $\hat{W}=W_{s_0}$.
	\end{algorithmic}
\end{algorithm}

Notice that the computational complexities of Algorithm \ref{alg4} and Algorithm \ref{alg5} are polynomial in $n$, even though the spaces of $\mathscr{R}_{L,j}(G)$ (resp. $\mathscr{O}_{L,i}(H)$) might be generated by images (resp. kernels) of exponentially many matrices.

Using Algorithms \ref{alg4} and \ref{alg5}, we can state Algorithms \ref{alg6}, \ref{alg7} and \ref{alg8} for getting reduced order $\alpha$, $\beta$ or $(\alpha,\beta)$ - partial realizations for an LSS $\Sigma$ respectively. The matrices $V$ and $W$ computed in Algorithms \ref{alg6} and \ref{alg7} satisfy the conditions of Theorems \ref{theo:krylov1} and \ref{theo:krylov2} respectively.

\begin{algorithm}
	\caption{Reduction for $\beta$-partial realization
		\newpage 
		\textbf{Inputs:} $\Sigma=(p,m,n,Q,\{(A_q,B_q,C_q)|q \in Q\},x_0)$, $\beta$ nice column selection, $\mathcal{A}^{\beta}_{0}$, $\mathcal{A}^{\beta}_{q,j}$ NDFAs such that $L(\mathcal{A}^{\beta}_{0})=L^\beta_0$ and $L(\mathcal{A}^{\beta}_{q,j})=L^\beta_{q,j}$ for all $(q,j) \in \mathbb{J}_\beta$.
		\newpage  
		\textbf{Output:} $\bar{\Sigma}=(p,m,r,Q,\{(\bar{A}_q,\bar{B}_q,\bar{C}_q)|q \in Q\},\bar{x}_0)$ such that $\bar{\Sigma}$ is a $\beta$-partial realization of $\Sigma$.
	}
	\label{alg6}
	\begin{algorithmic}[1]
		\STATE Use Algorithm \ref{alg4} with inputs $(\{A_q\}_{q \in Q},x_0)$, $j=1$ and NDFA $\mathcal{A}^\beta_0$. Store the output $\hat{V}$ as $V_{x_0}:=\hat{V}$.
		\FOR{$(q,j) \in \mathbb{J}_\beta$}
		\STATE  Use Algorithm \ref{alg4} with inputs $(\{A_q\}_{q \in Q},B_q)$, $j$ and NDFA $\mathcal{A}^\beta_{q,j}$.  Store the output $\hat{V}$ as $V_{q,j}:=\hat{V}$.
		\ENDFOR
		\STATE $V=\mathbf{orth}(\begin{bmatrix} V_{x_0} & V_1 & \cdots & V_{t_{\beta}-1} \end{bmatrix})$ where $t_\beta= |\mathbb{J}_\beta|+1$ is the subset cardinality of $\beta$ as in Def. \ref{def:subsets}.
		\STATE
		Let $r=\Rank (V)$, $V^{-1}$ be a left inverse of $V$ and set
		\[
		\bar{A}_q=V^{-1}A_qV \mbox{, } \bar{C}_q=C_qV \mbox{, } \bar{B}_q=V^{-1}B_q \mbox{, } \bar{x}_0=V^{-1}x_0.
		\]
		\RETURN $\bar{\Sigma}=(p,m,r,Q,\{(\bar{A}_q,\bar{B}_q,\bar{C}_q)|q \in Q\},\bar{x}_0)$.
    \end{algorithmic}
\end{algorithm}

\begin{algorithm}
	\caption{Reduction for $\alpha$-partial realization
		\newpage 
		\textbf{Inputs:} $\Sigma=(p,m,n,Q,\{(A_q,B_q,C_q)|q \in Q\},x_0)$, $\alpha$ nice row selection, $\mathcal{A}^{\alpha}_{q,i}$ NDFAs such that $L(\mathcal{A}^{\alpha}_{q,i})=L^\alpha_{q,i}$ for all $(q,i) \in \mathbb{I}_\alpha$.
		\newpage  
		\textbf{Output:} $\bar{\Sigma}=(p,m,r,Q,\{(\bar{A}_q,\bar{B}_q,\bar{C}_q)|q \in Q\},\bar{x}_0)$ such that $\bar{\Sigma}$ is an $\alpha$-partial realization of $\Sigma$.
	}
	\label{alg7}
	\begin{algorithmic}[1]
		\FOR{$(q,i) \in \mathbb{I}_\alpha$}
		\STATE  Use Algorithm \ref{alg5} with inputs $(\{A_q\}_{q \in Q},C_q)$, $i$ and NDFA $\mathcal{A}^\alpha_{q,i}$.  Store the output $\hat{W}$ as $W_{q,i}:=\hat{W}$.
		\ENDFOR
		\STATE $W^{\mathrm{T}}=\mathbf{orth}(\begin{bmatrix} W_{1}^{\mathrm{T}} & \cdots & W_{t_\alpha}^{\mathrm{T}} \end{bmatrix})$ where $t_\alpha=|\mathbb{I}_\alpha|$ is the subset cardinality of $\alpha$ as in Def. \ref{def:subsets}.
		Let $r=\Rank (W)$ and let $W^{-1}$ be a right inverse of $W$. Set
		\[
		\bar{A}_q=WA_qW^{-1} \mbox{, } \bar{C}_q=C_qW^{-1} \mbox{, } \bar{B}_q=WB_q \mbox{, } \bar{x}_0=Wx_0.
		\]
		\RETURN $\bar{\Sigma}=(p,m,r,Q,\{(\bar{A}_q,\bar{B}_q,\bar{C}_q)|q \in Q\},\bar{x}_0)$.
	\end{algorithmic}
\end{algorithm}

\begin{algorithm}
	\caption{Reduction for $(\alpha,\beta)$-partial realization
		\newpage 
		\textbf{Inputs:} $\Sigma=(p,m,n,Q,\{(A_q,B_q,C_q)|q \in Q\},x_0)$, $\beta$ nice column selection, $\alpha$ nice row selection, $\mathcal{A}^{\beta}_{0}$, $\mathcal{A}^{\beta}_{q,j}$, $\mathcal{A}^{\alpha}_{q,i}$ NDFAs such that $L(\mathcal{A}^{\beta}_{0})=L^\beta_0$ and $L(\mathcal{A}^{\beta}_{q,j})=L^\beta_{q,j}$ for all $(q,j) \in \mathbb{J}_\beta$ and $L(\mathcal{A}^{\alpha}_{q,i})=L^\alpha_{q,i}$ for all $(q,i) \in \mathbb{I}_\alpha$.
		\newpage  
		\textbf{Output:} $\bar{\Sigma}=(p,m,r,Q,\{(\bar{A}_q,\bar{B}_q,\bar{C}_q)|q \in Q\},\bar{x}_0)$ such that $\bar{\Sigma}$ is an $(\alpha,\beta)$-partial realization of $\Sigma$.
	}
	\label{alg8}
	\begin{algorithmic}[1]
		\STATE Compute the matrix $V$ as in Algorithm \ref{alg6}.
		\STATE Compute the matrix $W$ as in Algorithm \ref{alg7}.
		\IF{$\Rank(W)=r$ and $\Rank(V)=r$ and $\Rank(WV)=r$}
		\STATE $P=V(WV)^{-1}$
		\[
		\bar{A}_q=WA_qP \mbox{, } \bar{C}_q=C_qP \mbox{, } \bar{B}_q=WB_q \mbox{, } \bar{x}_0=Wx_0.
		\]
		\ENDIF
		\RETURN $\bar{\Sigma}=(p,m,r,Q,\{(\bar{A}_q,\bar{B}_q,\bar{C}_q)|q \in Q\},\bar{x}_0)$.
	\end{algorithmic}
\end{algorithm}

\begin{Lemma} [Correctness of Algorithms \ref{alg6}, \ref{alg7} and \ref{alg8}]
	Let $\Sigma$ be an LSS of the form \eqref{LSS}.
	\begin{enumerate}
		\item  Let $\beta$ be a nice column selection and assume that $L^\beta_0$, $L^{\beta}_{q,j}$, $(q,j) \in \mathbb{J}_\beta$ are regular languages. Let $\mathcal{A}^\beta_0$ and $\mathcal{A}^\beta_{q,j}$, $(q,j) \in \mathbb{J}_\beta$ be co-reachable NDFAs which accept $L^\beta_0$ and $L^{\beta}_{q,j}$, $(q,j) \in \mathbb{J}_\beta$ respectively. Then the LSS $\bar{\Sigma}$ returned by Algorithm \ref{alg6} is a $\beta$-partial realization of $f=Y_{\Sigma,x_0}$.
		
		\item Let $\Sigma$ be an LSS of the form \eqref{LSS}. Let $\alpha$ be a nice row selection and assume that $L^{\alpha}_{q,i}$, $(q,i) \in \mathbb{I}_\alpha$ are regular languages. Let $\mathcal{A}^\alpha_{q,i}$, $(q,i) \in \mathbb{I}_\alpha$ be co-reachable NDFAs which accept $L^{\alpha}_{q,i}$, $(q,i) \in \mathbb{I}_\alpha$ respectively. Then the LSS $\bar{\Sigma}$ returned by Algorithm \ref{alg7} is an $\alpha$-partial realization of $f=Y_{\Sigma,x_0}$.
		
		\item Let $\Sigma$ be an LSS of the form \eqref{LSS}. Let $\alpha$ be a nice row selection, $\beta$ be a nice column selection and assume that $L^\beta_0$, $L^{\beta}_{q,j}$, $(q,j) \in \mathbb{J}_\beta$, $L^{\alpha}_{q,i}$, $(q,i) \in \mathbb{I}_\alpha$ are regular languages. Let $\mathcal{A}^\beta_0$, $\mathcal{A}^\beta_{q,j}$, $(q,j) \in \mathbb{J}_\beta$ and $\mathcal{A}^\alpha_{q,i}$, $(q,i) \in \mathbb{I}_\alpha$ be co-reachable NDFAs which accept $L^\beta_0$, $L^{\beta}_{q,j}$, $(q,j) \in \mathbb{J}_\beta$, $L^{\alpha}_{q,i}$, $(q,i) \in \mathbb{I}_\alpha$ respectively. Then the LSS $\bar{\Sigma}$ returned by Algorithm \ref{alg8} is an $(\alpha,\beta)$-partial realization of $f=Y_{\Sigma,x_0}$ if the condition $\Rank(W)=r$ and $\Rank(V)=r$ and $\Rank(WV)=r$ holds.
	\end{enumerate}
\end{Lemma}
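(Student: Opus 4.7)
My plan is to reduce each of the three claims to the corresponding theorem among Theorems \ref{theo:krylov1}, \ref{theo:krylov2}, \ref{theo:krylov3}, by showing that the matrices $V$ and $W$ produced by Algorithms \ref{alg6}, \ref{alg7}, \ref{alg8} satisfy the hypotheses $\IM(V)=\mathscr{R}_{\beta}$ and/or $\ker(W)=\mathscr{O}_{\alpha}$. Once this is established, the structure of the returned matrices $\bar{A}_q,\bar{B}_q,\bar{C}_q,\bar{x}_0$ matches exactly the ones described in those theorems, so the conclusion (being an $\alpha$-, $\beta$-, or $(\alpha,\beta)$-partial realization) follows immediately.

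For part 1, I would start from the decomposition
\[
\mathscr{R}_{\beta}= \mathscr{R}_{L^\beta_0,1}(x_0) + \sum_{(q,j)\in\mathbb{J}_\beta} \mathscr{R}_{L^\beta_{q,j},j}(B_q),
\]
which is equation \eqref{eq:new_Rbeta_Oalpha}. By Lemma \ref{lem:correctness}, each call to Algorithm \ref{alg4} returns a matrix whose image is exactly one of the summands: $\IM(V_{x_0})=\mathscr{R}_{L^\beta_0,1}(x_0)$ and $\IM(V_{q,j})=\mathscr{R}_{L^\beta_{q,j},j}(B_q)$. Since $\mathbf{orth}$ preserves column span, the final $V=\mathbf{orth}([V_{x_0}\ V_1\ \cdots\ V_{t_\beta-1}])$ satisfies
\[
\IM(V)=\IM(V_{x_0})+\sum_{(q,j)\in\mathbb{J}_\beta}\IM(V_{q,j})=\mathscr{R}_{\beta},
\]
and $V$ is full column rank by construction. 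This $V$ together with the assignment of $\bar{A}_q,\bar{B}_q,\bar{C}_q,\bar{x}_0$ in Algorithm \ref{alg6} is exactly the situation of Theorem \ref{theo:krylov1}, so $\bar{\Sigma}$ is a $\beta$-partial realization of $f$.

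Part 2 is the dual: using $\mathscr{O}_{\alpha}=\bigcap_{(q,i)\in\mathbb{I}_\alpha}\mathscr{O}_{L^\alpha_{q,i},i}(C_q)$ and Lemma \ref{lem:correctness}, each $W_{q,i}$ computed by Algorithm \ref{alg5} satisfies $\ker(W_{q,i})=\mathscr{O}_{L^\alpha_{q,i},i}(C_q)$. Since $\ker\bigl(\bigl[\begin{smallmatrix}W_1\\ \vdots\\ W_{t_\alpha}\end{smallmatrix}\bigr]\bigr)=\bigcap_{i}\ker(W_{q,i})$, and $\mathbf{orth}$ applied to the transpose reshuffles rows into a full-row-rank matrix with the same row space and hence the same kernel, the $W$ assembled in Algorithm \ref{alg7} has $\ker(W)=\mathscr{O}_{\alpha}$ and is full row rank. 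Theorem \ref{theo:krylov2} then gives that $\bar{\Sigma}$ is an $\alpha$-partial realization of $f$. For part 3, the $V$ and $W$ from Algorithms \ref{alg6} and \ref{alg7} already satisfy $\IM(V)=\mathscr{R}_{\beta}$ and $\ker(W)=\mathscr{O}_{\alpha}$; together with the explicit rank assumption $\Rank(WV)=r$ of Algorithm \ref{alg8}, all hypotheses of Theorem \ref{theo:krylov3} are met, and the assignment of the reduced matrices matches verbatim, so $\bar{\Sigma}$ is an $(\alpha,\beta)$-partial realization of $f$.

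The steps above are essentially bookkeeping; the one place where a small technical argument is needed, and which I view as the main (if modest) obstacle, is justifying the switch between sums/intersections of subspaces and the $\mathbf{orth}$-based concatenation used in the algorithms, particularly for the dual side, where one must argue that $\mathbf{orth}$ applied to the transpose yields a matrix whose row space equals the sum of the row spaces of the blocks, so that its kernel equals the intersection of the individual kernels. Beyond this linear-algebraic sanity check, the proof is a direct application of the previously proved Theorems \ref{theo:krylov1}--\ref{theo:krylov3} and Lemma \ref{lem:correctness}.
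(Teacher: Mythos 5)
Your proof is correct and follows exactly the route the paper intends: the paper gives no explicit proof of this lemma, but the surrounding text ("The matrices $V$ and $W$ computed in Algorithms \ref{alg6} and \ref{alg7} satisfy the conditions of Theorems \ref{theo:krylov1} and \ref{theo:krylov2} respectively") indicates precisely your reduction, namely combining Lemma \ref{lem:correctness} with the decomposition \eqref{eq:new_Rbeta_Oalpha} to show $\IM(V)=\mathscr{R}_\beta$ and $\ker(W)=\mathscr{O}_\alpha$, and then invoking Theorems \ref{theo:krylov1}--\ref{theo:krylov3}. Your explicit justification of the $\mathbf{orth}$-based assembly (image of a concatenation equals the sum of images; kernel of a stack equals the intersection of kernels, preserved under row-space-preserving orthogonalization) fills in the only detail the paper leaves implicit.
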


The final result of the paper will be to connect a certain switching sequence for a continuous time LSS to a nice column or row selection. For this, we need additional definitions as below.


\begin{Definition}[The generating language] \label{def:generating_lang}
	The generating language $\mathcal{L}$ for a sequence of discrete modes $\upsilon=q_1q_2 \cdots q_k \in Q^{*}$, $q_1,\ldots, q_k \in Q$, $k \ge 2$, is defined as
	\begin{equation}
		\mathcal{L}_{\upsilon}=\{ v \in Q^* \mid v=(q_1)^{\omega_1}(q_2)^{\omega_2} \cdots (q_k)^{\omega_k} \mbox{, } \omega_1, \dots, \omega_{k} \in \mathbb{N} \}
	\end{equation}
\end{Definition}

\begin{Definition} \label{nice_sel_switch_seq}
	\emph{(Nice selection related to a switching sequence).}
	A nice column selection $\beta_{\mu}$ related to a  sequence
	of discrete modes
	  $\upsilon=q_1q_2 \cdots q_k \in Q^{+}$, $q_1,\ldots, q_k \in Q$, $k \ge 2$ is defined as
	\begin{equation}
		\beta_{\upsilon}=\{ (w,q_0,j) \mid q_0w \in \mathcal{L}_{\upsilon}, j \in J, q_0 \in Q \} \cup \{ w \mid w \in \mathcal{L}_{\upsilon} \}
	\end{equation}
	In addition, a nice row selection $\alpha_{\upsilon}$ related to a sequence of discrete modes
	$\upsilon=q_1q_2 \cdots q_k \in Q^{+}$, $q_1,\ldots, q_k \in Q$, $k \ge 2$
	is defined as
	\begin{equation}
		\alpha_{\upsilon}=\{ (v,q,i) \mid vq \in \mathcal{L}_{\upsilon}, i \in I, q \in Q \}
	\end{equation}
\end{Definition}

The following theorem makes it possible to use the model reduction method with nice selections with respect to a specific switching sequence.

\begin{Theorem} \label{thm:nice_sel_switch_seq}
	Consider a sequence of discrete modes
	$\upsilon=q_1q_2 \cdots q_k \in Q^{+}$, $q_1,\ldots, q_k \in Q$, $k \ge 2$
	Let $\bar{\Sigma}$ be an LSS which is a $\beta_\upsilon$ (resp. $\alpha_\upsilon$) - partial realization of $f=Y_{\Sigma,x_0}$. Then, for every switching signal which satisfies \eqref{eq:switch_seq_sigma_def} for some $t_1,\ldots, t_k > 0$, and for all $u \in L_{loc}(\mathbb{R}_+,\mathbb{R}^m)$,
	  \[ \forall s \in [0,t]: Y_{\Sigma,x_0}(u,\sigma)(s)=Y_{\bar{\Sigma},\bar{x}_0}(u,\sigma)(s) \]
	  where $t=t_1+\cdots+t_k$.
\end{Theorem}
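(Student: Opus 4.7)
The plan is to reduce the claim to equality of the analytic kernel functions of $\Sigma$ and $\bar\Sigma$ indexed by every prefix $q_1\cdots q_j$ of $\upsilon$, and then match their Taylor coefficients at the origin using the partial realization hypothesis. First I would apply the generalized kernel representation \eqref{eq:inputoutput} pointwise: for any $s\in[0,t]$ there is a unique $j\in\{1,\ldots,k\}$ with $s\in(t_1+\cdots+t_{j-1},\,t_1+\cdots+t_j]$, and $\sigma|_{[0,s]}$ corresponds to the switching sequence $(q_1,t_1)\cdots(q_{j-1},t_{j-1})(q_j,\tau_j)$ with $\tau_j=s-\sum_{l<j}t_l$. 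Hence $Y_{\Sigma,x_0}(u,\sigma)(s)$ is an affine expression in $u$ whose coefficients are determined by $K^{Y_\Sigma}_{q_1\cdots q_j}(t_1,\ldots,t_{j-1},\tau_j)$ and by $G^{Y_\Sigma}_{q_i\cdots q_j}(\cdot)$ for $1\le i\le j$, and the same expression with identical integrands holds for $Y_{\bar\Sigma,\bar x_0}$. Thus it suffices to prove, for every $1\le i\le j\le k$, pointwise equality of the corresponding kernels of $\Sigma$ and $\bar\Sigma$.

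For an LSS realization, these kernels take the explicit product-of-exponentials form
\[
K^{Y_\Sigma}_{q_1\cdots q_j}(\tau_1,\ldots,\tau_j)=C_{q_j}e^{A_{q_j}\tau_j}\cdots e^{A_{q_1}\tau_1}x_0,\quad G^{Y_\Sigma}_{q_i\cdots q_j}(\tau_i,\ldots,\tau_j)=C_{q_j}e^{A_{q_j}\tau_j}\cdots e^{A_{q_i}\tau_i}B_{q_i},
\]
and analogously for $\bar\Sigma$. Both sides are entire in the $\tau$'s, so they agree iff all Taylor coefficients at the origin do. Expanding each exponential in its power series, the coefficient of $\tau_1^{n_1}\cdots\tau_j^{n_j}$ in the first kernel equals, up to a constant factorial factor, $C_{q_j}A_w x_0$ with $w=q_1^{n_1}\cdots q_j^{n_j}$ (recall $A_w$ is read right-to-left), and in the second it equals $C_{q_j}A_w B_{q_i}$ with $w=q_i^{n_i}\cdots q_j^{n_j}$. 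The task therefore reduces to showing $C_{q_j}A_w x_0=\bar C_{q_j}\bar A_w\bar x_0$ and $C_{q_j}A_w B_{q_i}=\bar C_{q_j}\bar A_w\bar B_{q_i}$ for every such $w$.

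The final step is to check that every such $w$ lies in the relevant nice selection, so that Definition~\ref{def:alphabeta_partreal} yields the identities above after selecting the block row indexed by $q_j$. For the $\beta_\upsilon$ case, padding with zero exponents gives $w=q_1^{n_1}\cdots q_j^{n_j}q_{j+1}^0\cdots q_k^0\in\mathcal{L}_\upsilon$, hence $w\in\beta_\upsilon$; likewise $q_iw=q_1^0\cdots q_{i-1}^0 q_i^{n_i+1}q_{i+1}^{n_{i+1}}\cdots q_j^{n_j}q_{j+1}^0\cdots q_k^0\in\mathcal{L}_\upsilon$, so $(w,q_i,j')\in\beta_\upsilon$ for every $j'\in J$. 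The $\alpha_\upsilon$ case is dual: one verifies that $wq_j\in\mathcal{L}_\upsilon$ by the same padding, so $(w,q_j,i')\in\alpha_\upsilon$ for every $i'\in I$, and the $\alpha_\upsilon$-partial realization identities applied for every row index $i'$ produce the same equalities. The main obstacle is the combinatorial bookkeeping: one must confirm that the exponent tuples produced by Taylor expansion of kernels for arbitrary prefixes of $\upsilon$, together with their augmentation by a single letter on the left (respectively right), all remain in $\mathcal{L}_\upsilon$ after padding. Once this closure property of $\mathcal{L}_\upsilon$ is observed, the proof is a routine combination of the partial realization identities with analytic continuation of matrix exponentials.
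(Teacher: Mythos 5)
Your proposal is correct and follows essentially the same route as the paper's proof: reduce the output equality along $\sigma$ to equality of the products of matrix exponentials appearing in the variation-of-constants (kernel) representation, Taylor-expand those products at the origin, and observe that every resulting word $q_i^{n_i}\cdots q_j^{n_j}$ (padded with zero exponents) lies in $\mathcal{L}_\upsilon$, so the required identities follow from the $\beta_\upsilon$- (resp.\ $\alpha_\upsilon$-) partial realization hypothesis. The only difference is presentational: you phrase the reduction via the generalized kernel representation and spell out the dual $\alpha_\upsilon$ case, which the paper leaves as "similar arguments."
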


Intuitively, Theorem \ref{thm:nice_sel_switch_seq} says that if $\bar{\Sigma}$ is a $\beta_\upsilon$ (resp. $\alpha_\upsilon$) - partial realization of $f=Y_{\Sigma,x_0}$, then the outputs of $\Sigma$ and $\bar{\Sigma}$ along the switching sequence $\mu=(q_1,t_1) \dots (q_k,t_k)$ are the same. Hence, if we apply Algorithm \ref{alg6} or Algorithm \ref{alg7} with $\beta=\beta_\upsilon$ or respectively $\alpha=\alpha_\upsilon$, then we will get an LSS $\bar{\Sigma}$ which has the same input-output behavior as $\Sigma$ along the switching sequence $\mu$.

\begin{proof}\emph{(Theorem \ref{thm:nice_sel_switch_seq}).}
	Only the part related with the nice column selection will be proven, similar arguments can be used to prove the result for
	nice row selections.
	
	Note that the output of $\Sigma$ at a time $s \in [t_1+\cdots+t_{i-1}, t_1+\cdots+t_{i-1}+t_i]$, $i=2, \dots, k$ due to the switching signal $\sigma$, initial state $x_0$ and input $u \in L_{loc}(\mathbb{R}_+,\mathbb{R}^m) $ is given by
	\begin{equation} \label{eq:output_pf_thm9}
		\begin{aligned}
			& Y_{\Sigma,x_0}(u,\sigma)(s) =  C_{q_i}e^{A_{q_i}[s-(t_1+\cdots+t_{i-1})]}e^{ A_{q_{i-1}}t_{i-1} } \cdots e^{A_{q_1}t_1}x_0 \\
			& + \sum_{j=1}^{i-1} \int_{0}^{t_j}C_{q_{i-1}} e^{A_{q_{i-1}}t_{i-1}} \cdots e^{A_{q_j}(t_j-\tau)}B_{q_j} u \left( \tau + \sum_{l=1}^{j-1} t_l \right)d\tau \\
			& + \int_{0}^{s}C_{q_i} e^{A_{q_i}[s-(t_1+\cdots+t_{i-1})-\tau]}B_{q_i} u \left( \tau + \sum_{l=1}^{i-1} t_l \right)d\tau
		\end{aligned}
	\end{equation}
	whereas $Y_{\bar{\Sigma},\bar{x}_0}$ is given by
	\begin{equation} \label{eq:output_pf_thm9_2}
		\begin{aligned}
			& Y_{\bar{\Sigma},\bar{x}_0} (u,\sigma)(s) =  \bar{C}_{q_i}e^{\bar{A}_{q_i}[s-(t_1+\cdots+t_{i-1})]}e^{ \bar{A}_{q_{i-1}}t_{i-1} } \cdots e^{\bar{A}_{q_1}t_1}\bar{x}_0 \\
			& + \sum_{j=1}^{i-1} \int_{0}^{t_j}\bar{C}_{q_{i-1}} e^{\bar{A}_{q_{i-1}}t_{i-1}} \cdots e^{\bar{A}_{q_j}(t_j-\tau)}\bar{B}_{q_j} u \left( \tau + \sum_{l=1}^{j-1} t_l \right)d \tau \\
			& + \int_{0}^{s}\bar{C}_{q_i} e^{\bar{A}_{q_i}[s-(t_1+\cdots+t_{i-1})-\tau]}\bar{B}_{q_i} u \left( \tau + \sum_{l=1}^{i-1} t_l \right)d \tau.
		\end{aligned}
	\end{equation}
	Hence, for $Y_{\Sigma,x_0}$ and $Y_{\bar{\Sigma},\bar{x}_0}$ to be equal, it is sufficient that the following equations hold:
	\begin{equation} \label{eq:pf_thm9_state_trans}
		\begin{aligned}
			 & C_{q_i} e^{A_{q_i}[s-(t_1+\cdots+t_{i-1})]} e^{ A_{q_{i-1}}t_{i-1} }   \cdots e^{A_{q_1}t_1}x_0  = \\ 
			 & \bar{C}_{q_i}e^{\bar{A}_{q_i}[s-(t_1+\cdots+t_{i-1})]}e^{ \bar{A}_{q_{i-1}}t_{i-1} } \cdots e^{\bar{A}_{q_1}t_1}\bar{x}_0, \\
			& C_{q_{i-1}}e^{A_{q_{i-1}}t_{i-1}} \cdots e^{A_{q_1}t_1}B_{q_1} = \bar{C}_{q_{i-1}}e^{\bar{A}_{q_{i-1}}t_{i-1}} \cdots e^{\bar{A}_{q_1}t_1}\bar{B}_{q_1}, \\
			& \mbox{    } \vdots \\
			& C_{q_{i-1}}e^{A_{q_{i-1}}t_{i-1}}B_{q_{i-1}} = \bar{C}_{q_{i-1}}e^{\bar{A}_{q_{i-1}}t_{i-1}}\bar{B}_{q_{i-1}},  \\
			& C_{q_i}e^{A_{q_i}[s-(t_1+\cdots+t_{i-1})]}B_{q_i} = \bar{C}_{q_i}e^{\bar{A}_{q_i}[s-(t_1+\cdots+t_{i-1})]}\bar{B}_{q_i}.
		\end{aligned}
	\end{equation}
	By Definition \ref{def:generating_lang}, the generating language for $\upsilon$ can be defined as the set
	\begin{equation*}
		\mathcal{L}_{\upsilon}= \{ (q_1)^{\omega_1}(q_2)^{\omega_2}\cdots (q_k)^{\omega_k} \mid \omega_1, \dots, \omega_{k} \in \mathbb{N}\}.
	\end{equation*}
	Therefore, if the Taylor series expansion of the matrix exponentials in the equations of \eqref{eq:pf_thm9_state_trans} is taken around $t=0$, it can be seen that for
	\eqref{eq:pf_thm9_state_trans} to hold, it is sufficient that  
	\begin{equation} \label{eq:pf_thm9_Markov_eq}
		\begin{aligned}
			& C_qA_wx_0 = \bar{C}_q\bar{A}_w\bar{x}_0 \mbox{ for all } wq \in \mathcal{L}_{\upsilon} \\
			& C_qA_wB_{q_0} = \bar{C}_q\bar{A}_w\bar{B}_{q_0} \mbox{ for all } q_0wq \in \mathcal{L}_{\upsilon}.
		\end{aligned}
	\end{equation}
	holds. In turn, \eqref{eq:pf_thm9_Markov_eq} follows from the
	assumption that $\bar{\Sigma}$ is a $\beta_{\upsilon}$-partial
	realization of $Y_{\Sigma,x_0}$, if we use the definition of
	$\beta_{\upsilon}$.
	Hence, $Y_{\Sigma,x_0}(u,\sigma)(s)=Y_{\bar{\Sigma},\bar{x}_0}(u,\sigma)(s)$ for $s \in [0,t_1+t_2+\cdots+t_k]$.
\end{proof}

The theorem above builds the relationship between a certain switching sequence and its related nice selection. Hence, it makes it possible to acquire an approximation to an LSS whose input-output behavior is identical for all switching sequences $\mu=(q_1,t_1) \cdots (q_k,t_k)$ for a fixed sequence of discrete modes $q_1,\ldots,q_k$, and whose order $r$ is possibly smaller than $n$ (Note that since $V \in \mathbb{R}^{n \times r}$ is of full column rank, $r \leq n$).

\section{Numerical examples}
\label{sec:exam}

In this section, two generic numerical examples are presented to illustrate the model reduction procedure. One of the numerical examples is for an LSS who has stable local modes. With this example, it is aimed to show the flexibility of the nice selections about choosing the specific local modes, on which the approximation should focus. Whereas in the other numerical example, the LSS has unstable local modes, and an $N$-partial realization is acquired for the original system to illustrate a solution to the analogue of the moment matching problem for linear systems.

Firstly, the procedure is applied to a SISO, $11$th order LSS with $2$ discrete modes i.e., to an LSS $\Sigma$ of the form $\Sigma=(p,m,n,Q,\{(A_q,B_q,C_q)|q \in Q\},x_0)$ with $p=m=1$, $n=11$, $Q=\{1,2\}$. The randomly generated system has locally stable modes. The data of $A_q$, $B_q$, $C_q$ parameters and the initial state $x_0$ used for simulation is also available from https://kom.aau.dk/\texttildelow mertb/. A random switching signal with minimum dwell time (time between two subsequent changes in the switching signal) of $0.4$ for mode $1$ and $0.1$ for mode $2$ is used for simulation. Note that the minimum dwell time for the first mode is chosen to be higher since for this example, the approximation will be focused more on mode $1$ than mode $2$. The input $u(t)$ used for simulation is an array of white Gaussian noise. The simulation time interval used is $t=[0,1]$ \footnote{Recall that the method is based on matching the coefficients of the Taylor series expansion for $Y_{\Sigma,x_0}$ around $t=0$, hence the simulation time horizon should be chosen ``small enough''. It should be noted that coming up with a priori error bounds for the moment matching problem is challenging even for the linear case \cite{antoulas}. Consequently, the matter of up to which time horizon the method gives a ``good'' approximation is an open problem, and for now, it can be decided by a posteriori experiments related to the specific problem at hand.}. For the nice selection $\beta$ given as
\begin{align*}
	\beta= & \{ \varepsilon, 1, (\varepsilon,1,1), (1,1,1), (11,1,1), \\ 
	& (111,1,1), (112,1,1), (\varepsilon,2,1) \},
\end{align*}
an approximation LSS $\bar{\Sigma}$ of order $8$ is acquired which is a $\beta$-partial realization of $\Sigma$. Note that, from the set $\beta$, it can be seen that the approximation is desired to be focused more on mode $1$ than mode $2$. In Fig. \ref{fig:example1}, $6$ plots are shown for comparison of the outputs of $\Sigma$ and $\bar{\Sigma}$ for random switching sequences $\sigma(t)$ with given properties. It can be seen from Fig. \ref{fig:example1}, whenever the first operating mode is mode $1$ and mode $1$ operates much more than mode $2$ in total time horizon, the approximation is better. Last point to mention about this example is that the same simulation is ran for $500$ hundred times with random switching sequences with the given properties, for the case when the first operating mode is mode $1$. The best fit rates (BFRs) for each simulation is calculated according to the following (\cite{ljung}, \cite{toth2012})
\[
\mbox{BFR}=100 \%. \max \left( 1-\frac{\norm{y(\cdot)-\bar{y}(\cdot)}_2}{\norm{y(\cdot)-y_m}_2},0 \right)
\]
and mean of the BFRs over these $500$ simulations is acquired as $73.5848\%$, whereas the best and worst acquired BFR is $88.6476\%$ and $13.0214
\%$ respectively. The mean of BFR values over $500$ simulations for this example implies that the method yields a good approximation for such a system $\Sigma$ in the given time horizon.
\begin{figure}
	\centering
	\includegraphics[width=0.5\textwidth]{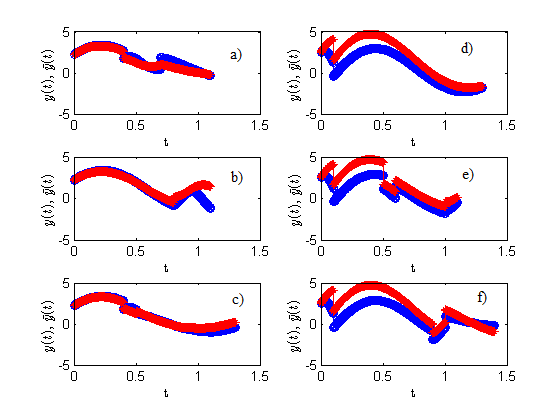}
	\caption{The response $y(t)$ (in blue) of the original LSS $\Sigma$ of order $11$ and the response $\bar{y}(t)$ (in red) of its $\beta$-partial realization $\bar{\Sigma}$ of order $8$ for various switching sequences. The switching sequences $\mu$ for each plot are as follows: $a)$ $\mu= (1,0.4)(2,0.3)(1,0.4)$, $b)$ $\mu=(1,0.8)(2,0.3)$, $c)$ $\mu=(1,0.4)(2,0.1)(1,0.8)$, $d)$ $\mu=(2,0.1)(1,1.2)$, $e)$ $\mu=(2,0.1)(1,0.4)(2,0.1)(1,0.4)(2,0.1)$, $f)$ $\mu=(2,0.1)(1,0.8)(2,0.1)(1,0.4)$}
	\label{fig:example1}
\end{figure}

The procedure is also applied to get a reduced order approximation to an LSS whose local modes are unstable. The original LSS used in this case is an LSS of the form $\Sigma=(p,m,n,Q,\{(A_q,B_q,C_q)|q \in Q\},x_0)$ with $p=m=1$, $n=12$ and $Q=\{1,2\}$. The resulting reduced order model $\bar{\Sigma}_1$ is a $1$-partial realization of $y_{\Sigma}$ of order $9$. Note that the precise number of matched Markov parameters of the form $M^f(v)$ is equal to the number of words in the set $Q^{\leq 1}$, and it is given by
\[
\frac{D^{N+1}-1}{D-1}=\frac{2^{1+1}-1}{2-1}=3.
\]
The same parameters in the first example are used with the exception of minimum dwell time for both modes being $0.1$ and the simulation time horizon being $t=[0,3]$. Again the output $y(t)$ of the original system $\Sigma$ and the output $\bar{y}(t)$ of the reduced order system $\bar{\Sigma}_1$ are simulated for $500$ random switching sequences and input trajectories. The mean of the BFRs for this example is $79.0518\%$; whereas, the best acquired BFR is $90.8013\%$ and the worst $62.7846\%$. The outputs $y(t)$ and $\bar{y}(t)$ of the most successful simulation for this example are illustrated in Fig.~\ref{fig:example2}.
\begin{figure}
	\centering
	\includegraphics[width=0.5\textwidth]{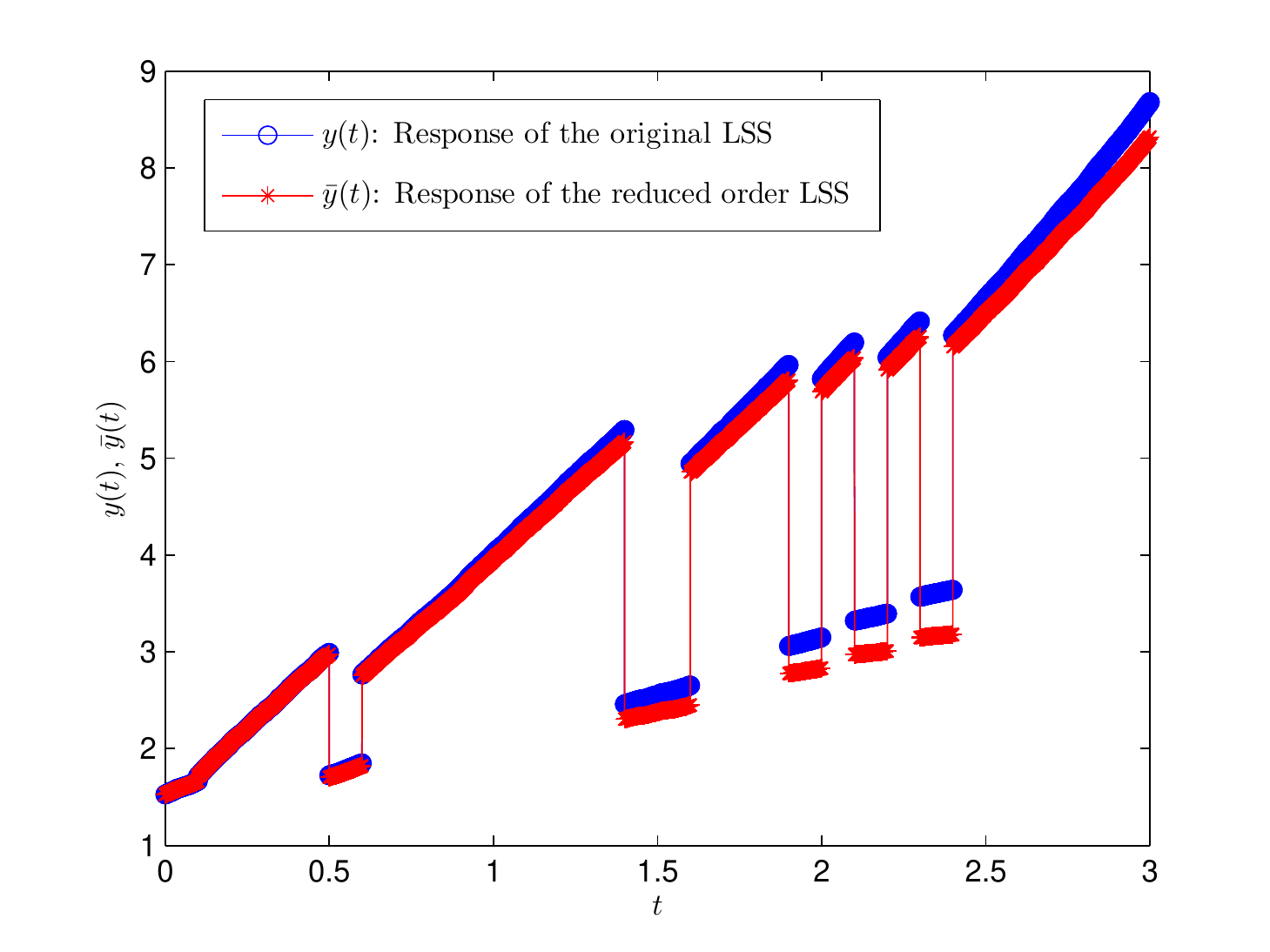}
	\caption{The response $y(t)$ of the original LSS $\Sigma$ of order $12$ and the response $\bar{y}(t)$ of the reduced order approximation LSS $\bar{\Sigma}_1$ of order $9$.}
	\label{fig:example2}
\end{figure}

\section{Conclusion}
\label{sec:conc}

Two moment matching procedures for model reduction of continuous time LSSs has been given. The first method is the direct analogue of the moment matching approaches in the linear case, for LSSs. The second method relies on the nice selections of some desired vectors in the reachability or observability space of an LSS. The notion of nice selections gives flexibility to the user of the procedure in the following sense: It is possible to focus the approximation on some preferred local modes more than the others. It has been proven that with this procedure, as long as a certain criterion is satisfied, it is possible to acquire at least one reduced order approximation to the original LSS whose Markov parameters related to the specific nice selection are matched with the original one's. Finally, it has been shown that nice selections can be used for matching the input - output behavior of an LSS with another one of possibly lower order, for a specific switching sequence. Discovering the relationship between a set of switching sequences with nice selections for continuous time LSSs would be a potential future research topic since it would solve the problem of approximation or minimization for restricted switching dynamics.


%

\appendices

\section{Proof of Theorem \ref{lem:nice_selections}}
To present the proof of Theorem \ref{lem:nice_selections} we will introduce an ordering on $Q^*$ as follows:

\begin{Definition} \label{def:lexico}
	\emph{(Ordering on $Q^*$).} Suppose that $Q=\{1,\dots,D\}$. Let the map $\phi: Q^* \rightarrow \mathbb{N}$ be defined as follows:
	\begin{equation}
		\begin{aligned}
			& \phi(\varepsilon)=0 \\
			& \phi(v)=q_1(D+1)^{k-1}+q_2(D+1)^{k-2} + \cdots +q_k.
		\end{aligned}
	\end{equation}
	where $v = q_1q_2 \cdots q_k$, $q_1, \dots, q_k \in Q$, $k \geq 1$.
	Then an \emph{ordering} $\prec$ on the elements of $Q^*$ can be defined as follows: For any two words $v,w \in Q^*$, if $\phi(v) < \phi(w)$, then $v \prec w$. 
\end{Definition}

Intuitively, this ordering states that $v \prec w$ if $w$ is bigger than $v$ when the words $v,w$ are interpreted as integer numbers in the basis $D+1$. Note that for any $v,w \in Q^{*}$,  $v \prec w$ implies $|v| \le |w|$, and $|v| < |w|$ implies  $v \prec w$.

\begin{proof}[Proof of Theorem \ref{lem:nice_selections}]
\emph{i)} Let $R_{n-1}$ denote the matrix
\[ R_{n-1}= \begin{bmatrix}
A_{v_1}\widetilde{B} & A_{v_2}\widetilde{B} & \cdots & A_{v_{M_{n-1}}}\widetilde{B}
\end{bmatrix} \]
where $M_{n-1}$ denotes the cardinality of the set $Q^{\leq n-1}$; $v_1,v_2, \dots, v_{M_{n-1}} \in Q^{\leq n-1}$ and $v_1 \prec v_2 \prec \cdots \prec v_{M_{n-1}}$ with respect to the ordering in Definition \ref{def:lexico}.

In this part of the proof, $x_0$ of $\Sigma$ is assumed to be zero for simplicity in notation, note that the proof can easily be modified for the case when $x_0$ is nonzero. Since $\Sigma$ is assumed to be minimal, for any $r<n$, there exists $r$ linearly independent columns of $R_{n-1}$. Suppose these columns are picked in the following manner: Scanning through the columns of $R_{n-1}$ from left to right, choose the first $r$ columns linearly independent from the preceding columns. Our claim is that, this method would yield a nice column selection. To prove the theorem, we claim that if $(A_{\sigma}A_wB_q)_{:,j}$ is an element of the selection defined, $(A_wB_q)_{:,j}$ must also be an element i.e., if $(w\sigma,q,j) \in \beta$, $(w,q,j) \in \beta$. We prove this claim by contradiction. Suppose the columns are chosen in this way and for a $q, \sigma \in Q$, $w \in Q^*$ and $j \in \{1,...,m\}$, the $j$th column of $A_\sigma A_wB_q$ is an element of this selection while the $j$th column of $A_wB_q$ is not. This means $(A_wB_q)_{:,j}$ is a linear combination of the columns of $R_{n-1}$ preceding it while $(A_\sigma A_wB_q)_{:,j}$ is not. Let $x_1,...,x_k$ denote the columns of $R_{n-1}$ which precede the column $(A_wB_q)_{:,j}$ and $x_1,...,x_h$ with $h \geq k$ denote the columns of $R_{n-1}$ which precede the column $(A_\sigma A_wB_q)_{:,j}$. Note that for some $c_1,...,c_k \in \mathbb{R}$
\[
(A_wB_q)_{:,j}=c_1x_1+...+c_kx_k
\]
Thus the column $(A_\sigma A_wB_q)_{:,j}$ can be written as
\begin{equation} \label{eq:proof_nice_lin_indep1}
\begin{aligned}
(A_\sigma A_wB_q)_{:,j}=A_\sigma(A_wB_q)_{:,j} & =A_\sigma(c_1x_1+...+c_kx_k) \\ 
& =c_1A_\sigma x_1+...+c_kA_\sigma x_k
\end{aligned}
\end{equation}
and since all the vectors $A_\sigma x_1,...,A_\sigma x_k$ precede the column $(A_\sigma A_wB_q)_{:,j}$, each of them can also be written as a linear combination of the columns $x_1,...,x_h$ which precede $(A_\sigma A_wB_q)_{:,j}$. That means for some $a_{st} \in \mathbb{R}$, $s=1,...,h$, $t=1,...,h$, \eqref{eq:proof_nice_lin_indep1} can be rewritten as
\begin{equation*}
\begin{aligned}
(A_\sigma A_wB_q)_{:,j} & = c_1A_\sigma x_1+...+c_kA_\sigma x_k \\
& =c_1(a_{11}x_1+...+a_{1h}x_h)+... \\
& +c_k(a_{h1}x_1+...+a_{h1}x_h)
\end{aligned}
\end{equation*}
i.e., the column $(A_\sigma A_wB_q)_{:,j}$ is a linear combination of its preceding columns. This contradicts our assumption and concludes the proof of the reachability part.

\emph{ii)} This part is the dual of part \emph{i}.
\end{proof}

\ifCLASSOPTIONcaptionsoff
  \newpage
\fi



%

\bibliographystyle{plain}
\bibliography{./bare_jrnl}

%
%

%

%
%
%




\end{document}